\DeclareMathOperator*{\E}{\mathbf{E}}
\let\Pr\relax
\DeclareMathOperator*{\Pr}{\mathbf{Pr}}
\newtheorem{theorem}{Theorem}
\newtheorem{definition}{Definition}
\newtheorem{example}{Example}
\newtheorem{lemma}{Lemma}
\newtheorem{prop}{Proposition}
\title{Predict and Match: Prophet Inequalities with Uncertain Supply}
\date{}
\newcommand*\samethanks[1][\value{footnote}]{\footnotemark[#1]}
\author{Reza Alijani\thanks{Duke University. Email: \texttt{\{alijani,kamesh,knwang\}@cs.duke.edu}.} \and Siddhartha Banerjee\thanks{Cornell University. Email: \texttt{sbanerjee@cornell.edu}.} \and Sreenivas Gollapudi\thanks{Google Research. Email: \texttt{sgollapu@google.com}.} \and Kamesh Munagala\samethanks[1] \and Kangning Wang\samethanks[1]}
\begin{document}

\thispagestyle{empty}
\maketitle

\pagenumbering{gobble}
\begin{abstract}
\noindent We consider the problem of selling perishable items to a stream of buyers in order to maximize social welfare. 
A seller starts with a set of identical items, and each arriving buyer wants any one item, and has a valuation drawn {\em i.i.d.} from a known distribution.
Each item, however, disappears after an {\em a priori} unknown amount of time that we term the {\em horizon} for that item. 
The seller knows the (possibly different) distribution of the horizon for each item, but not its realization till the item actually disappears. 
As with the classic prophet inequalities, the goal is to design an online pricing scheme that competes with the prophet that knows the horizon and extracts full social surplus (or welfare).

Our main results are for the setting where items have independent horizon distributions satisfying the monotone-hazard-rate (MHR) condition. Here, for any number of items, we achieve a constant-competitive bound via a conceptually simple policy that balances the rate at which buyers are accepted with the rate at which items are removed from the system. We implement this policy via a novel technique of matching via probabilistically simulating departures of the items at future times. 
Moreover, for a single item and MHR horizon distribution with mean $\mu$, we show a tight result: There is a fixed pricing scheme that has competitive ratio at most $2 - 1/\mu$, and this is the best achievable in this class. 

We further show that our results are best possible. First, we show that the competitive ratio is unbounded without the MHR assumption even for one item. Further, even when the horizon distributions are {\em i.i.d.} MHR and the number of items becomes large, the competitive ratio of any policy is lower bounded by a constant greater than $1$, which is in sharp contrast to the setting with identical deterministic horizons.
\end{abstract}

\newpage
\pagenumbering{arabic}
\setcounter{page}{1}

\section{Introduction}
\noindent Online posted pricing problems are one of the canonical examples in online decision-making and optimal control. The basic model comprises of a fixed supply of non-replenishable items; buyers (demand) arrive in an online fashion over a fixed time interval, and the platform sets prices to maximize some objective such as social surplus (welfare) or revenue. Another variant of this setting is found in internet advertising, where the number of advertisements (supply) is assumed to be fixed (for example, based on contracts between the publisher and advertisers), while keywords/impressions (demand) arrive online, and are matched to ads via some policy. The demand is typically assumed to obey some underlying random process, which allows the problem to be cast as a Markov Decision Process (MDP); however, in many settings, such a formulation suffers from a ``curse of dimensionality'', making it infeasible to solve optimally.

An important idea for circumventing the computational intractability of optimal pricing is that of \emph{prophet inequalities} --- heuristics with performance guarantees with respect to the optimal policy in hindsight (\emph{i.e.}, the performance of a \emph{prophet} with full information of future arrivals). 
The simplest prophet inequality has its origins in the statistics community~\cite{krengel1977semiamarts} --- given a single item and $T$ arriving buyers with values drawn from known distributions, there is a pricing scheme using only a single price that extracts at least half the social surplus earned by the prophet (moreover, this is tight). 
More recently, 
there has been a long line of work generalizing this setting to incorporate multiple (possibly non-identical)
items, as well as combinatorial buyer valuations~\cite{HajiaghayiKP,chawla2010multi,kleinberg2012matroid,feldman2014combinatorial,DuettingFKL17,
rubinstein2017combinatorial,correa2017posted,abolhassani2017beating,ehsani2018prophet}.

The aim of our work is to develop a theory of prophet inequalities for settings with \emph{uncertainty in future supply}. This is a natural extension of the basic posted-price setting, and indeed special cases of our framework have been considered before~\cite{samuel1996optimal,HajiaghayiKP}~(in the context of optimal secretary problems with a random ``freeze'' on hiring).  What makes these problems of greater relevance today is the rise of online `sharing economy' marketplaces, such as those for transportation (Lyft, Uber), labor (Taskrabbit, Upwork), lodging (Airbnb), medical services (PlushCare), \emph{etc}.  The novelty in such marketplaces arises because of their {\em two-sided nature}: in addition to buyers who arrive online, the supply is now controlled by ``sellers'' who can arrive and depart in an online fashion. For example, in the case of ridesharing/lodging platforms, the units of supply (empty vehicles/vacant listings) arrive over time, and have some {\em patience} interval after which they abandon the system (get matched to rides on other platforms/remove their listings).  
Supply uncertainty also arises in other settings, for instance, if items are {\em perishable} and last for {\em a priori} random amounts of time.
Our work aims to understand the design of pricing policies for such settings, and characterize how the resulting prophet inequalities depend on the characteristics of the supply uncertainty.

\subsection{Model}


\noindent We introduce ``supply uncertainty'' into the basic prophet inequality setting as follows: There are $m$ items present initially, but these do not last till the end of the buyer arrivals, but instead, depart after an {\em a priori} unknown amount of time. Formally, we assume each item $i$ samples a {\em horizon} from a distribution $H_i$, at which time it departs. We assume the horizon lengths for items are mutually independent, and also independent of the valuation distribution of the buyers. Note though that the items can have different horizon distributions.
We denote the maximum possible horizon length for any item as $n$. 

On the demand side, we assume there is an infinite stream of unit-demand buyers arriving online, where the valuation of the $h$-th arriving buyer is a random variable $X_h$ drawn \emph{i.i.d.} from a distribution $V$. From the perspective of a buyer, all items are interchangeable, and hence being matched to any item that has not yet departed yields value $X_h$.
Note that assuming an infinite stream of buyers is without loss of generality, because we can encode any upper bound on the number of buyers in the horizon distributions.

The algorithm designer knows the horizon distribution $H_i$ for each item, and the buyer value distribution $V$, but not the realized horizons for each item (until the item actually departs), or the value for any buyer.
The goal is to design an online pricing scheme that competes with a prophet that knows the realized horizons of each item and the valuation sequence of buyers, and extracts full social surplus (or welfare). 

The main outcome of the standard prophet inequality is that there are constant-competitive algorithms for maximizing welfare, even when buyers are heterogeneous and arrive in arbitrary order. This however turns out to be impossible in the presence of item horizons without additional assumptions. First, even with 
\emph{i.i.d.} horizons, achieving a constant factor turns out to be impossible for general horizon distributions (\emph{cf.} Theorem~\ref{thm:gen_low_any}); thus to make progress, we need more structure on the horizons. One natural assumption is that each item is more and more likely to depart as time goes on, which can be formalized as follows. 
\begin{definition}
\rm
A horizon distribution $H$ satisfies the \emph{monotone-hazard-rate (MHR) condition} if:
\[
\Pr_{h \sim H}[h \geq h^* + 2 \mid h \geq h^* + 1] \leq \Pr_{h \sim H}[h \geq h^* + 1 \mid h \geq h^*], \quad\forall\,h^* \geq 1.
\]
\end{definition}
Several distributions satisfy the MHR condition, including uniform, geometric, deterministic, and Poisson; note also that truncating an MHR distribution preserves the condition.

Finally, even with MHR horizons, buyer heterogeneity is a barrier for obtaining a constant-competitive algorithm, as demonstrated by the following example, with {\em deterministic valuations and known order of arrivals}.
\begin{example}
\label{eg1}
\rm
Given $m=1$ item with horizon following a geometric distribution with parameter $0.5$, consider a sequence of $n$ buyers with $v_h=2^h$ for $h = 1,2,\ldots,n$. The expected value of the prophet is $\Theta(n)$ while any algorithm can only achieve a constant value in expectation.
\end{example}


\subsection{Main Result: Prophet Inequalities under Uncertain Supply}

\noindent The above discussion motivates us to study settings with \emph{i.i.d.} buyers, and items with MHR horizons. 
Our main result is that these two assumptions are \emph{sufficient to obtain a constant-competitive approximation to the prophet welfare}. In particular, our main technical result is the following theorem, which we prove in Section~\ref{sec:multi_MHR}.

\begin{theorem} 
\label{thm:main}
There is a constant-competitive online policy for social surplus for any $m \ge 1$ items with independent and {\em possibly non-identical} MHR horizon distributions, and unit-demand buyers arriving with \emph{i.i.d.} valuations.
\end{theorem}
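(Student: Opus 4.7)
My approach is to (i) upper bound the prophet's welfare by an ex-ante LP relaxation, and (ii) design an online policy whose expected welfare matches this bound up to a constant factor via simulated departures and rate balancing.

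For the upper bound, I would introduce variables $y_{i,t}\in[0,1]$ standing for the probability that the optimal offline matching pairs item $i$ with the buyer arriving at time $t$. The natural constraints are $y_{i,t}\le\Pr[h_i\ge t]$ (item $i$ must be alive), $\sum_t y_{i,t}\le 1$ (each item used at most once), and $\sum_i y_{i,t}\le 1$ (one buyer per step). Because buyers are i.i.d., the total load $\sum_i y^*_{i,t}$ at each step can be matched to a quantile threshold $\tau_t$ of $V$, yielding $\E[\mathsf{OPT}]\le\sum_t\bigl(\sum_i y^*_{i,t}\bigr)\cdot\E[X\mid X\ge\tau_t]$. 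This is the benchmark the online policy has to compete against.

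The online algorithm is parameterized by the LP solution $y^*$. For every item $i$ still in the system I maintain a simulated virtual departure time $\tilde{h}_i$ sampled from the posterior of $H_i$ given the survival history so far; this can be refreshed as time advances. When a buyer arrives at time $t$, the policy picks at most one ``candidate'' item, treating $y^*_{i,t}/\Pr[h_i\ge t]$ as the per-item offer rate conditional on being alive, and posts the threshold price $\tau_t$ for that candidate. The role of the simulated $\tilde{h}_i$ is to let the policy emulate the prophet's allocation schedule before observing actual departures, so that each item is ``spent'' roughly when the prophet would have spent it rather than being held until a real departure occurs.

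The analysis then reduces to controlling the probability that item $i$ is offered at time $t$ given that it is alive, which I would show is a constant fraction of the ideal rate $y^*_{i,t}/\Pr[h_i\ge t]$. This is exactly where the rate-balancing idea enters: one chooses the per-item offer probabilities so that the aggregate offer rate at time $t$ is comparable to the aggregate expected departure rate $\sum_i\Pr[h_i=t]$, which guarantees that a constant fraction of intended matches actually succeed. The MHR assumption is indispensable here because it forces survival probabilities to decay geometrically once the hazard is non-trivial, so the departure process concentrates over intervals of length $\Theta(\mu_i)$ and pathological mismatches between offer and departure rates are ruled out. I expect the main obstacle to be handling the correlations introduced by multiple alive items competing for the same buyer stream together with the evolving posterior of each $\tilde{h}_i$; I would most likely deal with these by conditioning on the set of alive items at each time and arguing item-by-item that each contributes at least a constant fraction of its LP share.
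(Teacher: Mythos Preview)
Your ex-ante LP is a valid upper bound on the prophet, but the step you yourself flag as the ``main obstacle'' is where the entire proof lives, and your plan for it is too thin to go through. The issue is not merely correlation among alive items competing for buyers; it is that matches and departures both deplete supply, yet the algorithm cannot distinguish in advance which items will depart early. An item-by-item claim of the form ``item $i$ is offered at time $t$ with probability at least $c\cdot y^*_{i,t}/\Pr[h_i\ge t]$ conditional on being alive'' does not follow from anything you have set up: once you start offering item $i$ at the LP rate, the event that $i$ is still unmatched at $t$ becomes entangled with the full history of offers, buyer values, and departures of other items, and ``MHR forces geometric decay'' does not control this. Your simulated horizon $\tilde h_i$ is in the right spirit, but you never specify what the policy \emph{does} with it, so there is no mechanism to analyze. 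Likewise, the assertion that ``one chooses the per-item offer probabilities so that the aggregate offer rate is comparable to the aggregate departure rate'' is a statement about the LP optimum that you have not established and that need not hold.

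The paper takes a different route that avoids the fine-grained LP. It partitions time into stages in which the expected number of surviving items halves, and bounds the prophet stage by stage using only the expected item \emph{count}, not item identities (Lemma~\ref{lem:ProEarly}). This relaxation is weaker than yours, but its coarseness is exactly what enables the rounding: the algorithm skips every other stage, and at the start of each retained stage samples each alive item into a candidate set with precisely the probability it would have departed during the skipped stage. MHR enters not as a concentration statement but as a monotonicity---an item is at least as likely to survive a given number of steps earlier in time than later---which lets the sampled candidate set be stochastically coupled to the prophet's actual departures (proof of Theorem~\ref{thm:early_apx}); a Chernoff bound on the candidate-set size then yields the constant. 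Length-one stages and the tail stage are handled by separate elementary policies and combined by randomization. If you want to push your LP-based plan through, you would need an online contention-resolution scheme that tolerates exogenous item departures; the paper's stage-and-sample construction is effectively such a scheme, built on the coarser relaxation precisely because the fine-grained one resists direct rounding.
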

Though the complete algorithm is somewhat involved, at a high level, it is based on a simple underlying idea: to be constant-competitive against the prophet, we need to choose prices so as to balance the rate of matches and departures. 
Achieving this in the general case is non-trivial, and requires some new technical ideas.
However, for the special case of a single item, balancing can be achieved via a simple fixed pricing scheme. In Section~\ref{sec:mhrSingle}, we use this to obtain the following tight result for the $m = 1$ setting (this also serves as a primitive for our overall algorithm): 
\begin{theorem} 
	\label{thm:single}
	There is a fixed pricing scheme for a single item with an MHR horizon distribution with mean $\mu$ that has competitive ratio $2 - 1/\mu$. Further, this bound is tight for the geometric horizon distribution with mean $\mu$.
\end{theorem}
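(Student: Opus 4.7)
The plan is to use a fixed threshold $p$ chosen so that $q := \Pr_{X \sim V}[X \ge p] = 1/\mu$ (using independent tie-breaking if $V$ has an atom at $p$). Write $A := \E[(X - p)_+]$. The first buyer whose value reaches $p$ arrives at a geometric time $T$ with parameter $q$, independent of the horizon $h$; the item is sold iff $T \le h$, at expected value $\E[X \mid X \ge p] = p + A/q$, so
\[
\E[\mathrm{ALG}] \;=\; \left(p + \tfrac{A}{q}\right)\cdot \E\bigl[1 - (1-q)^h\bigr].
\]
For the prophet I apply the Samuel--Cahn bound $\max_{j \le h} X_j \le p + \sum_{j \le h}(X_j - p)_+$ together with the independence of $h$ and the i.i.d.\ values $X_j$, obtaining $\E[\mathrm{OPT}] \le p + \mu A = p + A/q$. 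The factor $p + A/q$ cancels in the ratio, so the theorem reduces to proving the moment inequality $\E[1-(1-q)^h] \ge 1/(2-q)$ for every MHR $h$ with mean $1/q$.

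I prove this key inequality by comparison against a geometric $h'$ of the same parameter $q$ (hence the same mean $\mu$), for which a direct computation gives $\E[(1-q)^{h'}] = (1-q)/(2-q)$ with equality. Writing $c := 1-q$ and applying Abel summation yields $\E[c^h] = c - q \sum_{k \ge 2} c^{k-1}\Pr[h \ge k]$, so the required bound is equivalent to
\[
\sum_{k \ge 2} c^{k-1} \Delta(k) \;\ge\; 0, \qquad \Delta(k) := \Pr[h \ge k] - c^{k-1}.
\]
The MHR condition implies that the ratios $r_k := \Pr[h \ge k+1]/\Pr[h \ge k]$ are non-increasing in $k \ge 1$, so the quotient $\rho(k) := \Pr[h \ge k]/c^{k-1}$ is unimodal in $k$ (its successive ratios $r_k/c$ are non-increasing, so cross $1$ at most once and from above). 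Since $\rho(1) = 1$ and the equal-mean condition forces $\sum_{k \ge 1} \Delta(k) = 0$, this unimodality yields a single sign change in $\Delta$: there exists $k^{**}$ with $\Delta(k) \ge 0$ for $k \le k^{**}$ and $\Delta(k) \le 0$ for $k > k^{**}$. Subtracting the zero quantity $c^{k^{**}-1}\sum_{k \ge 2}\Delta(k)$ then rewrites the target sum as $\sum_{k \ge 2}(c^{k-1} - c^{k^{**}-1})\Delta(k)$, and each summand is a product of two same-signed factors (using that $c^{k-1}$ is non-increasing in $k$), hence non-negative.

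For the tightness claim, I plan to exhibit a value distribution $V$ under which the best fixed price against geometric horizon of mean $\mu$ drives the competitive ratio arbitrarily close to $2 - 1/\mu$; a suitable construction (for example, a continuous heavy-tailed distribution calibrated so that the Samuel--Cahn inequality is essentially tight and the optimal threshold satisfies $q = 1/\mu$) yields this by direct computation of the relevant geometric sums. The main technical obstacle in the overall plan is the moment inequality, whose proof crucially exploits the MHR hypothesis through the single-crossing structure relative to the matching-mean geometric reference; this is the only step where MHR is used quantitatively, and the bound indeed fails without it.
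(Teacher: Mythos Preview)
Your argument for the competitive ratio $2-\mu^{-1}$ is correct and arrives at exactly the same reduction as the paper---namely, $\textsc{Pro}/\textsc{Alg}\le \bigl(\E_h[1-(1-\mu^{-1})^h]\bigr)^{-1}$ followed by the moment inequality $\E_h[1-(1-\mu^{-1})^h]\ge (2-\mu^{-1})^{-1}$ for MHR $h$ with mean $\mu$---but the two steps are proved differently. For the prophet upper bound, the paper writes an LP relaxation and takes a Lagrangian at $\lambda=p$, while you use the Samuel--Cahn pointwise bound $\max_{j\le h}X_j\le p+\sum_{j\le h}(X_j-p)_+$ together with independence of $h$; both routes yield $\textsc{Pro}\le p+\mu A=\E[X\mid X\ge p]$. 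For the moment inequality, the paper proves (Lemma~\ref{lem:geo_dom}) that the geometric is second-order stochastically dominated by every MHR law of the same mean, via a compactness-plus-local-perturbation argument, and then invokes concavity of $h\mapsto 1-(1-\mu^{-1})^h$. Your single-crossing computation is more elementary and self-contained: it directly exploits log-concavity of the survival ratio $\rho(k)=\Pr[h\ge k]/c^{k-1}$, pinned at $\rho(1)=1$, together with the equal-mean identity $\sum_{k\ge 1}\Delta(k)=0$. The paper's SOSD lemma is a stronger structural statement that could be reused elsewhere; your proof is shorter and avoids the perturbation machinery.

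Your tightness plan, however, is both vague and weaker than what the theorem asserts. The paper (Theorem~\ref{theo:geoTight}) shows that \emph{no online algorithm}---not merely no fixed-price scheme---beats $2-\mu^{-1}$ when the horizon is geometric. The missing ingredient in your outline is the observation that, by memorylessness of the geometric horizon, the optimal online policy is stationary and hence single-threshold; only with this reduction does a lower bound for fixed prices suffice. The paper then uses an explicit two-point value distribution ($v_L$ with probability $1-p$, $v_H$ with probability $p$), chooses $v_H$ so that the two threshold choices are indifferent, and lets $p\to 0$ to obtain the limit $1+q=2-\mu^{-1}$. You should supply a concrete construction along these lines rather than gesturing at a heavy-tailed family.
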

Intuitively, the factor of two in the above theorem corresponds to the prophet considering matching and departures as the same, which an algorithm cannot do. The surprising aspect is that this simple policy is worst-case optimal within the class of instances with MHR horizons --- this is in contrast to deterministic horizons, where fixed pricing is known to be suboptimal for the special case of one item with known (deterministic) horizon and {\em i.i.d.} buyers~\cite{hill1982comparisons, ehsani2018prophet}.

\subsection{Lower Bounds} 
\noindent We complement our positive results by showing several lower bounds that establish their tightness. 
As mentioned above, in Section~\ref{sec:mhrSingle}, we show a (tight) lower bound of $2 - 1/\mu$ for $m = 1$ items with MHR horizons.  
Our main lower bounds in Section~\ref{sec:multi} generalizes this to $m \ge 1$ items. 

\begin{theorem}
	\label{thm:lb}
	For the multi-item setting with \emph{i.i.d.} geometric horizons: 
	\begin{itemize}
		\item For any number of items, there is a lower bound of $1.57$ on the competitive ratio of any dynamic pricing scheme; in the limit when the number of items goes to infinity, this improves to $2$.
		\item No fixed pricing scheme can be $o(\log \log m)$-competitive where $m$ is the number of items.
	\end{itemize}
\end{theorem}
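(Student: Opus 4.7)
My plan is to prove the two parts of Theorem~\ref{thm:lb} via two separate constructions, both using i.i.d.\ geometric horizons but different buyer valuation distributions.

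For the dynamic-pricing lower bound, I would take a simple valuation distribution (most naturally a two-point distribution analogous to the classical prophet inequality hard instance) together with geometric horizons of mean $\mu$ to be chosen. Because both horizons and valuations are exchangeable across items, any dynamic pricing policy may be symmetrized to depend only on the current count of surviving items, which reduces the optimization to a Markov decision process. For geometric departures and a two-point valuation, the Bellman equation admits a clean form whose fixed point can be solved explicitly; the prophet's welfare is separately expressible as the value of the optimal offline matching between items and high-value buyers. Taking the ratio and optimizing over $\mu$ and the valuation parameters should yield the $1.57$ bound uniformly in $m$. For the asymptotic $\to 2$ statement as $m \to \infty$, I would instead scale $\mu = \mu(m) \to \infty$ appropriately and invoke a fluid-limit argument: the many-item system concentrates, so the algorithm is essentially forced into a per-item policy, each bounded by $2 - 1/\mu \to 2$ via Theorem~\ref{thm:single}.

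For the fixed-pricing $\Omega(\log\log m)$ lower bound, my plan is to adapt the doubly-exponential ``equal-revenue'' construction of Babaioff--Dughmi--Kleinberg--Slivkins. Take buyer valuations supported on $\{2^{2^j} : 0 \le j \le K\}$ with $K = \Theta(\log\log m)$ and masses arranged so that $v \cdot \Pr[X \ge v]$ is constant across the support. Pair this with i.i.d.\ geometric horizons of mean $\mu$ large enough (say $\mu = m^{O(1)}$) that each item sees, with high probability, at least one buyer at every value scale before departing. The prophet can then assign each of the $m$ items to a buyer of a distinct high value scale, yielding expected welfare $\Omega(m \cdot K \cdot c) = \Omega(m \cdot c \cdot \log\log m)$, where $c$ is the per-scale equal-revenue constant. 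For any single fixed price, the equal-revenue property forces the expected welfare per item sold to be $O(c)$, and the total welfare to be $O(m \cdot c)$, yielding the claimed $\Omega(\log\log m)$ ratio.

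The hardest step in the dynamic-pricing part is extracting the precise numerical constant $1.57$; I expect this to require a one-dimensional numerical optimization of the Bellman fixed point rather than a clean symbolic bound. The fluid-limit step also demands care, since a naive reduction might only recover a constant strictly less than $2$; a coupling argument showing that coordination across items through the shared buyer stream cannot beat the single-item bound of Theorem~\ref{thm:single} should close this gap. In the fixed-pricing part, the main subtlety is ensuring that the geometric-departure randomness does not let a fixed price straddle multiple value scales at once; this follows from the extreme sparseness of the doubly-exponential support, which forces a fixed price to ``commit'' to essentially one scale.
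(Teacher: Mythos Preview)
Your proposal has genuine gaps in all three components.

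\medskip
\textbf{Dynamic pricing, asymptotic bound of $2$.} Your fluid-limit argument has the logic running in the wrong direction. Theorem~\ref{thm:single} says that for a \emph{single} item with geometric horizon, no policy beats competitive ratio $2-1/\mu$; but the task here is to show that no \emph{multi-item} policy achieves ratio below $2-\varepsilon$ as $m\to\infty$. Concentration in the many-item limit typically \emph{helps} the algorithm relative to the prophet (this is exactly why the deterministic-horizon ratio tends to $1$), so ``the system concentrates, hence each item is bounded by the single-item lower bound'' is not a valid inference. The paper does not reduce to the single-item case at all: it fixes a Pareto valuation $\Pr[v\ge x]=x^{-\alpha}$, sends the departure rate $\lambda\to 0^+$, upper-bounds the optimal policy state-by-state as $\sum_k (k\lambda)^{-1/\alpha}(\alpha-1)^{-1/\alpha}$, and lower-bounds the prophet by analyzing a \emph{time-reversed} Markov chain that tracks, going backward from each departure, whether enough high-value buyers have arrived to serve it. The ratio $\frac{\alpha/(\alpha-1)}{(\alpha-1)^{-1/\alpha}}$ is then maximized at $\alpha=2$, giving exactly $2$.

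\medskip
\textbf{Dynamic pricing, finite-$m$ bound of $1.57$.} The constant is $\pi/2$, and it arises from the same Pareto construction with $\alpha=2$ (the integral $\int_0^\infty \frac{du}{1+u^\alpha}$ produces a beta function and hence $\pi/(\alpha\sin(\pi/\alpha))$). A two-point valuation together with a Bellman fixed point is how the paper handles $m=1$, but for $m\ge 2$ the prophet's welfare is the value of an offline matching between $m$ items with random horizons and the buyer stream, which does not reduce to a clean recursion; you have not indicated how you would compute it, and there is no reason to expect $\pi/2$ to emerge from a two-point instance.

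\medskip
\textbf{Fixed pricing, $\Omega(\log\log m)$.} Your construction does not create the claimed gap. If $\mu$ is large enough that every item sees, with high probability, a buyer at the top scale $2^{2^K}$ before departing, then the fixed price $2^{2^K}$ sells every item at that value and matches the prophet up to constants; your claimed prophet welfare $\Omega(mKc)$ is also incorrect, since the prophet has only $m$ items and will place them all at the top scale. The gap in the paper comes from the \emph{opposite} regime: horizons have mean $\mu=m$, so the number of surviving items at time $jm$ is roughly $m q^j$ with $q\in[1/4,1/e]$, and the value distribution is tuned so that in the $j$-th interval the threshold matching supply to demand contributes $\Theta(m/j)$ to the prophet. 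Summing over $j\le\log_2 m$ gives $\Theta(m\log\log m)$, while any single threshold is shown to extract only $O(m)$. The essential mechanism is that items with longer realized horizons can be matched to rarer, higher-value buyers using \emph{stage-dependent} thresholds---exactly what a fixed price cannot do---and this requires horizons that are spread out, not uniformly long.
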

The above theorem implies that the MHR horizon setting, even with \emph{i.i.d.} horizons, is significantly different from the setting with multiple items and a single deterministic horizon (where fixed pricing extracts $\left(1 - O\left(\frac{1}{\sqrt{m}}\right)\right)$-fraction of surplus~\cite{alaei2014bayesian}). Put differently, the lower bound emphasizes that even with \emph{i.i.d.} horizons, to obtain a constant-competitive algorithm, it is {\em not} sufficient to replace the horizon distributions by their expectations and use standard prophet inequalities --- the stochastic nature of the horizons allows for significant deviations in the order of departures of the items, and a policy that knows this ordering can potentially extract much more welfare. Given this, it is quite surprising that a simple dynamic pricing scheme achieves a constant approximation.


Finally, we consider the general case where there is no restriction on the horizon distribution. In this setting, the presence of supply uncertainty severely limits the performance of any non-anticipatory dynamic pricing scheme in comparison to the omniscient prophet. In particular, we show that for any number of items and {\em i.i.d.} buyer valuations, the ratio between the welfare of any algorithm and the prophet grows with the horizon, \emph{even if the algorithm knows the realized valuations}.
\begin{theorem}
	\label{thm:gen_low_any}
	For any $m \ge 1$ items, there exists a family of instances such that the prophet has welfare $\Omega\left(\frac{\log n}{\log \log n}\right)$-factor larger than any  online policy, even if the policy knows all the realized values, but not the realized horizons. Here,  $n = \max_i \{\mathrm{supp}(H_i)\}$. 
\end{theorem}
This generalizes similar lower bounds for settings where the horizon is unknown~\cite{hill1991minimax,HajiaghayiKP}. The proof of this result is provided in Appendix~\ref{app:gen_low_any}.

\subsection{Technical Highlights}

\noindent At a high level, we achieve our results via a conceptually simple and natural class of {\em balancing policies} that generalizes policies for the deterministic-horizon case:

\begin{quote}
{\bf Balancing Policy.} Balance the rate at which buyers are accepted to the rate at which items depart the system because their horizon is reached.
\end{quote}

Converting this high-level description of balancing into a concrete policy requires new technical ideas.
We first note the technical challenges we encounter. In the setting with deterministic identical horizons~\cite{krengel1977semiamarts,feldman2014combinatorial}, we can achieve constant-competitive algorithms (or even better) via a {\em global} expected value relaxation that yields a fixed pricing scheme. Indeed, such an argument can safely assume buyers are non-identical with adversarial arrival order. However, the setting with stochastic horizons is very different. First, as Example~\ref{eg1} shows, even for $m =1$ item with geometric horizon, there is an $\Omega(n)$ lower bound when buyer valuations are not identically distributed. Secondly, for $m > 1$ items, we need dynamic pricing even in the simplest settings --- when horizons are \emph{i.i.d.} geometric (see Theorem~\ref{thm:lb}), or when they are deterministic. This precludes the use of a global one-shot analysis. 

At this point, we could try using techniques from stochastic optimization, particularly stochastic matchings~\cite{Immorlica,Bansal} and multi-armed bandits~\cite{GuhaM07,GuptaKMR11}. Here, the idea is to come up with a {\em weakly coupled} relaxation, say one policy per item, and devise a feasible policy by combining these.  
However, these algorithms crucially require the state of the system to only change via policy actions, and our problem more is similar to a {\em restless bandit} problem~\cite{GuhaMS10} where item departures cause the state of the system can change {\em regardless} of policy actions taken. Indeed, the actual departure process itself may significantly deviate from its expected values, making it non-trivial to use a global relaxation. 

\paragraph{Simulating Departures.} This brings up our technical highlight: {\em Instead of encoding the departure process in a fine-grained way into a relaxation, we simulate its behavior in our final policy.}  In more detail, we first write a weak relaxation of the prophet's welfare separately in a sequence of stages  with geometrically decreasing number of items. This only uses the expected number of items that survive in the stage, and not the identity of these items. The advantage of such a weak relaxation is that it yields a solution with nice structure: this policy non-adaptively sets a fixed price in each stage to balance the departure rate with the rate of matches. However, it is non-trivial to construct a feasible policy from this relaxation, since the relaxation decouples the allocations of the prophet across different stages, while any feasible algorithm's allocations are clearly coupled. Indeed, the optimal feasible policy is the solution to a dynamic program with state space exponential in $m$, and the prophet is further advantaged by knowing which items depart earlier in the future. 

Surprisingly, we show that our simple relaxation is still enough to achieve a constant-competitive algorithm. We do so by simulating the departure process, that is, by choosing items for matching with the same probability that they would have departed at a future point in time. This couples the stochastic process that dictates the number of items available in the policy with that in the prophet's upper bound, albeit with a constant-factor speedup in time. This yields a {\em non-adaptive} policy that makes its pricing decisions for the entire horizon, as well as the (randomized) sequence in which to sell the items, in advance. 
We believe such a policy construction that simulates the evolution of state of the system may find further applications in the analysis of  restless MDPs.

\paragraph{Lower Bounds from Time-Reversal.}
Our lower bounds are all based on demonstrating particular bad settings as in Example~\ref{eg1}. From a technical perspective, the most interesting construction is that in Theorem~\ref{thm:lb} --- here, we first consider a canonical, asymptotic regime where the horizon distribution is geometric with mean approaching infinity, and show that we can closely approximate the behavior of the prophet and the algorithm via an appropriate Markov chain. We then define and analyze a novel time-reversed Markov chain encoding the prophet's behavior, that captures matching a departing item to the optimal buyer that arrived previously.  

 \subsection{Related Work}
\label{sec:related}
\noindent The first prophet inequalities are due to Krengel and Sucheston~\cite{krengel1977semiamarts,krengel1978semiamarts}. It was subsequently shown~\cite{samuelcahn1}, there is a $2$-competitive fixed pricing scheme that is oblivious to the order in which the buyers arrive, and this ratio is tight in the worst case over the arrival order. Motivated by applications to online auctions, since then there have been several extensions to multiple items~\cite{kennedy1987prophet,HajiaghayiKP,alaei2014bayesian}, matching setting~\cite{alaei2012online, truong2019prophet}, matroid constraints~\cite{kleinberg2012matroid} and general combinatorial valuation functions~\cite{feldman2014combinatorial,rubinstein2017combinatorial}. 

Our work is a generalization of the single-item setting where buyer valuations are {\em i.i.d.} and the horizon is known, to the case where the horizon is stochastic and there are multiple items. The setting with known horizons was first considered in Hill and Kertz~\cite{hill1982comparisons}. In this case, the optimal pricing scheme can be computed by a dynamic program, and a sequence of results~\cite{kertz1986stop,abolhassani2017beating,correa2017posted} show a tight competitive ratio of $1.342$ for this dynamic program against the prophet. In contrast, we show that when the horizon is MHR, a simple fixed pricing scheme has  optimal competitive ratio of $2$.

A generalization of the {\em i.i.d.} setting is the recently-introduced {\em prophet secretary} problem where the buyers are not identical, but the order of arrival is a random permutation. In this case, fixed pricing is a tight $\frac{e}{e - 1}$-approximation~\cite{esfandiari2017prophet,ehsani2018prophet}; and a dynamic pricing scheme can beat this bound~\cite{azar2018prophet,correa2019prophet} by a slight amount. Though our results extend to this setting,  it is not the focus of our paper since the {\em i.i.d.}-valuations case is sufficient to bring out our conceptual message.

The random horizon setting has been extensively studied in the context of the classic secretary problem. When the horizon is unknown (that is, no distributional information at all), no constant-competitive algorithm is possible~\cite{hill1991minimax}. In the context of prophet inequalities, the unknown-horizon setting was considered by Hajiaghayi {\em et al.}~\cite{HajiaghayiKP}, who show again that no constant-competitive algorithm is possible. We use a similar example to extend this lower bound to the case where the horizon is stochastic from a known distribution. 



\section{Prophet Inequality for Heterogeneous Items with MHR Horizons}
\label{sec:multi_MHR}

\noindent In this section, we present the proof of Theorem~\ref{thm:main}. We first give an overview of our algorithm.  At a high level, this scheme attempts to balance the rate that items are assigned to buyers and the rate that items naturally depart. In Section~\ref{sec:stages}, we first introduce a way to divide the entire time horizon into disjoint stages in a way such that during the $k$-th stage,  $\frac{m}{2^k}$ items depart in expectation. We then bound the prophet's welfare separately for each stage (Section~\ref{sec:relax}) --- we do so via a relaxation that ignores the identity of the items, and only captures the constraint that the expected number of matches in a stage is at most the expected number of items present at the beginning of that stage. 

The key technical hurdle at this point is that when we make a matching, we do so without knowing exactly when items depart in the future. This changes the distribution of the items available in subsequent stages. 
To get around this, in each stage, we first {\em simulate} the future departure of items, and use this to select items available for matching in the current stage.
In more detail, in Section~\ref{sec:odd}, we split the stages alternately into even and odd stages, and develop an algorithm whose welfare approximates the welfare of the relaxed prophet from the odd stages (and by symmetry, another algorithm that approximates the welfare from the even stages). 

For approximating the welfare from the odd stages, the algorithm re-divides time into a new set of stages corresponding to the odd stages under the old division (See Figure~\ref{fig:faster}).
We then use each new stage to approximate the welfare generated in the corresponding odd stage in the old division; to do so, we sample candidate items for matching in the current stage with the probability they would leave in the subsequent even stage under the old division. Consequently, for every item, the probability of departure during an even stage under the old division is the same as of being selected for matching in the current stage. We show that this process couples the behavior of the algorithm and the benchmark, assuming the departure processes are MHR. Using concentration bounds, we show that this approach yields a constant approximation.

In addition to the above process, our algorithm needs to separately handle any stage of length $1$ (\emph{i.e.}, any single time period where the expected number of available items reduces by at least half), as well as a final stage where the expected number of available items is constant. We show that the welfare in the length $1$ phases is approximated by a blind matching algorithm which matches all incoming buyers (Section~\ref{sec:short}), while the welfare of the final period is approximated by an algorithm that randomly selects only one item for matching at the beginning, and discards the rest (Section~\ref{sec:final}). For the latter setting (\emph{i.e.}, for a single item setting), we present a tight $2$-competitive fixed pricing scheme for the $m = 1$ setting in Section~\ref{sec:mhrSingle}. Finally, the overall algorithm is based on randomly choosing one of the four candidate algorithms (\emph{i.e.}, for approximating the prophet welfare in odd stages, even stages, short stages, and the final stage), with an appropriately chosen distribution.


\subsection{Splitting Time into Stages}
\label{sec:stages}
\noindent As a first step, we divide the time horizon into $s + 1$ stages. The $k$-th stage corresponds to an interval $[\ell_k, r_k)$. 
For $k = 1, 2, \ldots, s$, we define $r_k$ by
\[
r_k \! \! := \min \! \left\{\! t \! + \! 1\!\!  \ :\ \!\! \E[\text{number of remaining items after time }t] \leq \frac{m}{2^k} \! \right\} \!.
\]
Also $\ell_{k+1} := r_k$ for $k = 1, 2, \ldots, s$; $\ell_1 = 1$ and $r_{s+1} = \infty$.

We set $s$ to be the smallest non-negative integer so that $\frac{m}{2^s} \leq 10$, \emph{i.e.}, $s := \max\left(0, \left\lceil \log_2 \frac{m}{10} \right\rceil\right)$. 
Within the first $s$ stages, we separate stages of length $r_k - \ell_k = 1$ from the rest. We term the stages of length at least $2$ as {\sc Long} stages, and those of length $1$ as {\sc Short} stages. We term the stage $s+1$ as the {\sc Final} stage. Note that based on our choice of $s$, the expected number of items which remain in the final stage is at most $10$, and unless $s = 0$, at least $5$ items in expectation survive at one time step earlier into the final stage.

\subsection{Upper Bound on Prophet's Welfare}
\label{sec:relax}
\noindent In this section, we develop a tractable upper bound for the prophet. Let $\textsc{Pro}$ denote the optimal welfare obtainable by the prophet. We term the total welfare of {\sc Pro} in the {\sc Long} stages as {\sc ProLong}, the total welfare in the {\sc Short} stages as {\sc ProShort}, and the welfare in the {\sc Final} stage as {\sc ProFinal}.
Clearly, we have:
\begin{lemma}
$\textsc{Pro} = \textsc{ProLong} + \textsc{ProShort} + \textsc{ProFinal}$.
\label{lem:Pro_Upper}
\end{lemma}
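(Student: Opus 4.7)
The plan is to observe that the statement is essentially a bookkeeping identity: it just says that the prophet's welfare decomposes additively over the three groups of stages. First I would verify that the stages $[\ell_1, r_1), [\ell_2, r_2), \ldots, [\ell_{s+1}, r_{s+1})$ form a partition of the time axis $\{1, 2, 3, \ldots\}$. By construction $\ell_1 = 1$, $\ell_{k+1} = r_k$ for $k = 1, \ldots, s$, and $r_{s+1} = \infty$, so consecutive intervals are contiguous and jointly cover every positive integer time step. Non-emptiness of each of the first $s$ stages follows from the defining formula $r_k = \min\{t+1 : \cdots\} \ge \ell_k + 1$, and the {\sc Final} stage is non-empty since $r_{s+1} = \infty$.

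Next I would note that the prophet's welfare, by definition, is the sum of the valuations of buyers it matches, and each matched buyer arrives at a unique time step $h \in \{1, 2, \ldots\}$ that belongs to exactly one stage. Hence $\textsc{Pro}$ equals the sum of its welfare contributions across the $s+1$ stages. The index set $\{1, \ldots, s+1\}$ is then partitioned into three classes by definition: {\sc Long} (indices $k \le s$ with $r_k - \ell_k \ge 2$), {\sc Short} (indices $k \le s$ with $r_k - \ell_k = 1$), and {\sc Final} (the single index $s+1$). Grouping the per-stage contributions according to this partition yields exactly $\textsc{ProLong} + \textsc{ProShort} + \textsc{ProFinal}$, establishing the identity.

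There is no substantive obstacle here; the content of the lemma is to fix the decomposition that the rest of Section~\ref{sec:multi_MHR} will exploit by bounding $\textsc{ProLong}$, $\textsc{ProShort}$, and $\textsc{ProFinal}$ separately. The only thing one must be careful about is that the three categories really are disjoint and exhaustive over stage indices, which is immediate from the definitions given in Section~\ref{sec:stages}.
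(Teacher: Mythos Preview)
Your proposal is correct and matches the paper's treatment: the paper simply writes ``Clearly, we have:'' and states the lemma without proof, so your bookkeeping argument is exactly the intended justification. One small inaccuracy worth noting (though it does not affect the conclusion): your claim that each of the first $s$ stages is non-empty is not guaranteed by the definition of $r_k$, since if the expected number of surviving items drops by a factor of four or more in a single time step one can have $r_k = r_{k-1}$ and hence a stage of length $0$; such a stage is neither {\sc Long} nor {\sc Short}, but it contributes zero welfare, so the identity still holds.
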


We bound $\textsc{ProLong}$ and {\sc ProShort} separately for each stage. Let $\textsc{Pro}_k$ denote the welfare from stage $k$, so that $\textsc{ProLong} + \textsc{ProShort} = \sum_{k=1}^s \textsc{Pro}_k$. 
\begin{lemma}
For $1 \le k \le s$, we have:
$$\textsc{Pro}_k  \leq  \min\left(r_k - \ell_k, \frac{m}{2^{k-1}}\right) \cdot \E_{v \sim V}[v \mid v \geq p_k],$$ where $p_k$ satisfies $\Pr_{v \sim V}[v \geq p_k]= \min\left(1, \frac{m / 2^{k-1}}{r_k - \ell_k}\right)$.\footnote{The existence of such $p$ is without loss of generality: Let $t = \min\left(1, \frac{m / 2^{k-1}}{r_k - \ell_k}\right)$. When there exists some $p^*$ such that $\Pr[v \geq p^*] > t$ and $\Pr[v > p^*] < t$, we could accept all values greater than $p^*$ and accept $p^*$ with probability $\frac{t - \Pr[v > p^*]}{\Pr[v = p^*]}$.}
\label{lem:ProEarly}
\end{lemma}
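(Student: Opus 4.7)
My plan is to bound $\textsc{Pro}_k$ by the expected sum of the top $\min(T_k, N_k)$ order statistics of the $T_k := r_k - \ell_k$ buyer values arriving during stage $k$, where $N_k$ is the number of items still alive at time $\ell_k$, and then apply a standard threshold-price inequality. First I would observe that within stage $k$ the prophet can perform at most $T_k$ matches (one per arriving buyer) and at most $N_k$ matches (items only depart, so no more items are ever available than the $N_k$ present at the start). Letting $J_k := \min(T_k, N_k)$ and writing $X_{(1)} \ge X_{(2)} \ge \cdots$ for the order statistics of the $T_k$ i.i.d.\ samples $X_1, \ldots, X_{T_k}$ from $V$, this gives $\textsc{Pro}_k \le \E\bigl[\sum_{i=1}^{J_k} X_{(i)}\bigr]$. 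The construction $\ell_k = r_{k-1}$ together with the definition of $r_{k-1}$ (with $\ell_1 = 1$ for the base case) ensures $\E[N_k] \le L_k := m/2^{k-1}$, and independence of horizons and valuations gives $\E[J_k] \le \min(T_k, L_k)$.

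Next I would invoke the elementary identity
\[
\sum_{i=1}^{j} X_{(i)} \;\le\; jp + \sum_{h=1}^{T_k} (X_h - p)^+,
\]
valid for every $j \le T_k$ and every real $p$, which holds because $X_{(i)} - p \le (X_{(i)} - p)^+$ and the top $j$ of the non-negative quantities $(X_h - p)^+$ sum to at most their grand total. Setting $j = J_k$, taking expectations, and using independence of $J_k$ from the values, I obtain $\textsc{Pro}_k \le p \cdot \E[J_k] + T_k \cdot \E_{v \sim V}[(v - p)^+]$.

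Finally I would specialize to $p = p_k$. By the defining property $\Pr[v \ge p_k] = \min(1, L_k/T_k)$, we have $T_k \cdot \E[(v - p_k)^+] = T_k \Pr[v \ge p_k]\bigl(\E[v \mid v \ge p_k] - p_k\bigr) = \min(T_k, L_k)\bigl(\E[v \mid v \ge p_k] - p_k\bigr)$; combined with $p_k \cdot \E[J_k] \le p_k \cdot \min(T_k, L_k)$, the $p_k$'s cancel and we recover the claimed bound $\min(T_k, L_k) \cdot \E[v \mid v \ge p_k]$. The one delicate point, which is the main (minor) obstacle, is the edge case $L_k \ge T_k$: here $p_k$ sits at the bottom of the support of $V$ and $\Pr[v \ge p_k] = 1$, so $\E[v \mid v \ge p_k]$ should be interpreted as $\E[v]$, and the above identity remains valid because both sides of the bound collapse to $T_k \cdot \E[v]$. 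Beyond this, the only real content is the clean separation of $\E[J_k]$ from $\E[(v - p_k)^+]$ via independence of the horizon process and the valuation process; every other step is a one-line computation.
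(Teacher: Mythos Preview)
Your proof is correct. The paper takes a different but equally short route: it parametrizes by the per-buyer matching probabilities $y_i = \Pr[\text{buyer } i \text{ is matched}]$, records the two constraints $\sum_i y_i \le m/2^{k-1}$ (the expected item budget at the start of the stage) and $W_i \le y_i \cdot \E\bigl[v \mid v \ge F_V^{-1}(1-y_i)\bigr]$ (the top-quantile bound on welfare given a matching probability), and then solves the resulting concave maximization by symmetry, concluding that all $y_i$ should be equal. Your argument instead bounds the prophet pointwise by the top-$J_k$ order statistics and applies the threshold decomposition $\sum_{i\le j} X_{(i)} \le jp + \sum_h (X_h-p)^+$, which is the familiar balanced-prices trick. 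What the paper's route buys is an explicit view of the extremal fractional allocation (handy if one later wants non-uniform per-buyer budgets); what yours buys is that it sidesteps the symmetry/concavity argument entirely. One small remark: you do not actually need the independence of $J_k$ from the values anywhere --- the inequality $\sum_{i\le J_k} X_{(i)} \le J_k p + \sum_h (X_h-p)^+$ holds for every realization, and its right-hand side is linear separately in $J_k$ and in the $(X_h-p)^+$ terms, so linearity of expectation alone gives the split $p\,\E[J_k] + T_k\,\E[(v-p)^+]$.
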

\begin{proof}
Fix a stage $k$. Let $W_i$ be the expected welfare that the prophet gets from buyer $i$, and let $y_i$ be the probability that buyer $i$ is matched by the prophet ($\ell_k \leq i < r_k$).

Notice that in expectation, at most $\frac{m}{2^{k-1}}$ items have horizons of at least $\ell_k$ by the definition of stages. Therefore, $
\sum_{i = \ell_k}^{r_k} y_i \leq \frac{m}{2^{k-1}}$.

Let $F_V$ be the CDF of the distribution $V$.  We have $
W_i \leq y_i \cdot \E_{v \sim V}\left[v \ \middle| \ v \geq F_V^{-1}(1 - y_i)\right]$, 
since when buyer $i$ is matched with probability $y_i$, the prophet cannot do better than getting the top $y_i$-percentile of the distribution $V$ from the buyer. With these constraints, we write a relaxation for the welfare of the prophet during stage $k$:
\begin{equation*}
\begin{array}{ll@{}ll}
\text{max}  & &\displaystyle\sum\limits_{i = \ell_k}^{r_k - 1} W_i\\[3ex] 
\text{s.t.}& &\displaystyle W_i \leq y_i \cdot \E_{v \sim V}\left[v \ \middle| \ v \geq F_V^{-1}(1 - y_i)\right],  &\forall i \! = \! \ell_k, \ell_k \! + \! 1, \ldots, r_k \! - \! 1,\\[3ex]&
&\displaystyle\sum\limits_{i = \ell_k}^{r_k - 1} y_i \leq \frac{m}{2^{k-1}},  &\\[3ex]&
&y_i \in [0, 1], &\forall i  \! = \! \ell_k, \ell_k \! + \! 1, \ldots, r_k \! - \! 1.
\end{array}
\end{equation*}
Clearly $y_i$'s should be equal in the optimal solution. Therefore,
\[
\sum\limits_{i = \ell_k}^{r_k - 1} W_i \leq (r_k - \ell_k) \cdot \min\left(1, \frac{m / 2^{k-1}}{r_k - \ell_k}\right) \cdot \E_{v \sim V}[v \mid v \geq p_k],
\]
where  $\Pr_{v \sim V}[v \geq p_k]= \min\left(1, \frac{m / 2^{k-1}}{r_k - \ell_k}\right)$. Summing over the $s$ stages finishes the proof.
\end{proof}
Notice that in our upper bound for $\sum_{k=1}^s \textsc{Pro}_k$, if an item departs during stage $k$, we allow it to be matched once in stage $1$, once in stage $2$, \ldots, and once in stage $k$. However, since the expected number of departures in each stage exponentially decreases, only a constant factor is lost comparing with the finer relaxation where we enforce the constraint that each item is only matched once across the stages. Our coarser relaxation enables a cleaner benchmark to work on.

We next bound $\textsc{ProFinal}$. Let $\textsc{ProSingle}_i$ be the optimal welfare of the prophet (from all stages) if item $i$ is the only item available in the system, {\em i.e.}, the single-item setting. We consider this setting in detail in Section~\ref{sec:mhrSingle}.
\begin{lemma}
$\textsc{ProFinal} \leq \sum_{i = 1}^m \Pr_{h_i \sim H_i}[h_i \geq \ell_{s+1}] \cdot \textsc{ProSingle}_i$.
\label{lem:ProFinal}
\end{lemma}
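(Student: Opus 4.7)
The plan is to decompose $\textsc{ProFinal}$ across items, upper bound the contribution of each item $i$ by its survival probability to $\ell_{s+1}$ times its single-item prophet value, and invoke the MHR hypothesis to justify the comparison. First, by linearity I will write $\textsc{ProFinal} = \sum_{i=1}^m \E[V_i]$, where $V_i$ is the value the prophet extracts by matching item $i$ to some buyer in the final stage (and $V_i = 0$ if item $i$ is not matched there); this is valid because every matching event of the prophet can be attributed to a unique item. Any such match requires $h_i \geq \ell_{s+1}$ and a buyer with index in $[\ell_{s+1}, h_i)$, so
\[
V_i \;\leq\; \mathbbm{1}[h_i \geq \ell_{s+1}] \cdot \max_{\ell_{s+1} \leq h < h_i} v_h.
\]

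Next, I will introduce the residual-horizon distribution $H_i'$: the law of $h_i - \ell_{s+1} + 1$ conditioned on $h_i \geq \ell_{s+1}$. Writing $g(k) := \E[\max_{1 \leq j \leq k - 1} v_j]$ (with the convention $g(1) = 0$), and using that the buyer valuations are i.i.d.\ from $V$ and independent of the horizons, I can relabel buyers to obtain
\[
\E[V_i] \;\leq\; \Pr[h_i \geq \ell_{s+1}] \cdot \E_{h' \sim H_i'}[g(h')].
\]
Since $\textsc{ProSingle}_i = \E_{h \sim H_i}[g(h)]$ and $g$ is non-decreasing, it then suffices to establish the stochastic dominance $H_i' \preceq_{\mathrm{st}} H_i$; chaining the inequalities and summing over $i$ will yield the lemma.

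The only nontrivial step, and where the MHR hypothesis enters, is proving this dominance. Setting $p_t := \Pr[h_i \geq t]$, MHR is exactly the statement that $p_{t+1}/p_t$ is non-increasing in $t$. For any $k \geq 1$,
\[
\Pr[H_i' \geq k] \;=\; \frac{p_{\ell_{s+1} + k - 1}}{p_{\ell_{s+1}}} \;=\; \prod_{j=0}^{k-2} \frac{p_{\ell_{s+1}+j+1}}{p_{\ell_{s+1}+j}} \;\leq\; \prod_{j=0}^{k-2} \frac{p_{j+2}}{p_{j+1}} \;=\; p_k \;=\; \Pr[H_i \geq k],
\]
where each factorwise inequality comes from applying MHR with the lower index shifted down by $\ell_{s+1}-1 \geq 0$. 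A minor bookkeeping point I will flag is that conditioning on $\{h_i \geq \ell_{s+1}\}$ does not alter the joint law of the buyer valuations, since valuations and horizons are independent by assumption.
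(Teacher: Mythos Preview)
Your proof is correct and follows the same approach as the paper: decompose the final-stage welfare by item, condition on the item surviving to $\ell_{s+1}$, and use the MHR assumption to show that the residual horizon is stochastically dominated by the original horizon distribution, so the conditional contribution is bounded by $\textsc{ProSingle}_i$. The paper compresses the dominance step into a single remark (``item $i$ would depart faster if it started at time $\ell_{s+1}$''), whereas you make it explicit via the telescoping product of survival ratios; the only cosmetic slip is the off-by-one in your definition of $g$ (the paper's convention lets an item with horizon $h_i$ be matched to buyer $h_i$), which does not affect the argument.
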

\begin{proof}
Let $W_i$ be the welfare that the prophet can get from item $i$ during the final stage. We have
\begin{align*}
W_i \leq &\Pr_{h_i \sim H_i}[h_i \text{ reaches the final stage}] \cdot\\
&\quad \quad \E_{h_i \sim H_i}[\text{welfare from item }i\text{ in the final stage} \mid h_i \text{ reaches the final stage}]\\
\leq &\Pr_{h_i \sim H_i}[h_i \text{ reaches the final stage}] \cdot \E[\text{welfare from item }i] \\
= &\Pr_{h_i \sim H_i}[h_i \geq \ell_{s+1}] \cdot \textsc{ProSingle}_i,
\end{align*}
where the second inequality comes from the MHR condition of $H_i$: $\Pr_{h_i \sim H_i}[h_i \geq \ell_{s+1} + k \mid h_i \geq \ell_{s+1} - 1 + k] \leq \Pr_{h_i \sim H_i}[h_i \geq 1 + k \mid h_i \geq k]$ --- item $i$ would depart faster if it started at time $\ell_{s+1}$.

Summing up the items, we have:
\[
\textsc{ProFinal} \leq \sum_{i = 1}^m W_i \leq \sum_{i = 1}^m \Pr_{h_i \sim H_i}[h_i \geq \ell_{s+1}] \cdot \textsc{ProSingle}_i. \qedhere
\]
\end{proof}

Lemmas~\ref{lem:Pro_Upper}, \ref{lem:ProEarly} and~\ref{lem:ProFinal} together give an upper bound for our benchmark as:
\begin{align*}
\textsc{Pro}\leq 
&\left[\sum_{k\leq s,\, r_k - \ell_k > 1}
\textsc{Pro}_k\right] + 
\left[\sum_{k\leq s,\, r_k - \ell_k = 1} \textsc{Pro}_k\right] + \left[\sum_{i = 1}^m \Pr_{h_i \sim H_i}[h_i \geq \ell_{s+1}] \cdot \textsc{ProSingle}_i\right]
\end{align*}
where the three term correspond to an upper bound on the prophet's welfare in the {\sc Long}, {\sc Short} and {\sc Final} stages respectively (\emph{i.e.}, {\sc ProLong}, {\sc ProShort}, and {\sc ProFinal}). In the next three sections, we describe three separate algorithms, each one of which, if run independently, provides an approximation to one of the terms. 
Our overall algorithm is then based on randomly choosing between the three algorithms with appropriately chosen distribution.

\subsection{Approximating {\sc ProLong}: The {\sc DepartureSimulation} Algorithm}
\label{sec:odd}

\noindent We first approximate upper bound given in Lemma~\ref{lem:ProEarly}. Within this, we approximate {\sc ProLong} and {\sc ProShort} separately. We first focus on {\sc ProLong}, since this is technically the most interesting, and postpone approximating {\sc ProShort} to Section~\ref{sec:short}.

\begin{figure}[htbp]
\centering
\begin{tikzpicture}[scale=0.4,yscale=0.6]
\draw [red,ultra thick] (0, 5) to (3, 5);
\draw [blue,ultra thick,dotted] (3, 5) to (4, 5);
\draw [red,ultra thick] (4, 5) to (6, 5);
\draw [blue,ultra thick,dotted] (6, 5) to (11, 5);
\draw [red,ultra thick] (11, 5) to (14, 5);
\draw [blue,ultra thick,dotted] (14, 5) to (17, 5);

\draw [thick] (0, 4.6) to (0, 5.4);
\draw [thick] (3, 4.6) to (3, 5.4);
\draw [thick] (4, 4.6) to (4, 5.4);
\draw [thick] (6, 4.6) to (6, 5.4);
\draw [thick] (11, 4.6) to (11, 5.4);
\draw [thick] (14, 4.6) to (14, 5.4);
\draw [thick] (17, 4.6) to (17, 5.4);

\draw [red,ultra thick] (0, 1) to (3-0.4, 1);
\draw [red,ultra thick] (3-0.4, 1) to (5-0.4-0.4, 1);
\draw [red,ultra thick] (5-0.4-0.4, 1) to (8-0.4-0.4-0.4, 1);

\draw [thick] (0, 0.6) to (0, 1.4);
\draw [thick] (3-0.4, 0.6) to (3-0.4, 1.4);
\draw [thick] (5-0.4-0.4, 0.6) to (5-0.4-0.4, 1.4);
\draw [thick] (8-0.4-0.4-0.4, 0.6) to (8-0.4-0.4-0.4, 1.4);

\node [draw] at (-1.5, 5) {OLD:};
\node at (17.5, 5) {$\cdots$};

\node [draw] at (-1.5, 1) {NEW:};
\node at (8.5-1.2, 1) {$\cdots$};

\node at (1.5, 5.8) {$S_1$};
\node at (3.5, 5.8) {$S_2$};
\node at (5, 5.8) {$S_3$};
\node at (8.5, 5.8) {$S_4$};
\node at (12.5, 5.8) {$S_5$};
\node at (15.5, 5.8) {$S_6$};

\node at (1.5-0.2, 0.2) {$S_1'$};
\node at (4-0.6, 0.2) {$S_3'$};
\node at (6.5-1.0, 0.2) {$S_5'$};

\draw [dashed] (0, 1) to (0, 5);
\draw [dashed] (2.6, 1) to (3, 5);
\draw [dashed] (2.6, 1) to (4, 5);
\draw [dashed] (4.2, 1) to (6, 5);
\draw [dashed] (4.2, 1) to (11, 5);
\draw [dashed] (6.8, 1) to (14, 5);
\end{tikzpicture}
\caption{Redivision of the Time Horizon}
\label{fig:faster}
\end{figure}
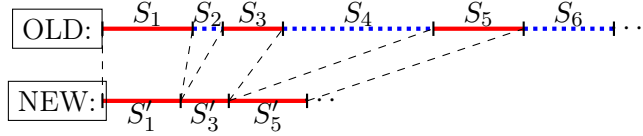

We approximate {\sc ProLong} by Algorithm~\ref{alg:EarlyStages}.  We divide all the $s$ stages into alternate odd and even stages. We focus on illustrating the approximation for odd stages, and that for even stages is identical.  We then re-divide time into stages corresponding to the original odd stages, as illustrated in Figure~\ref{fig:faster}, where $S_k$ stands for the old stage $k$ and $S_k'$ stands for the new stage $k$. At each odd stage, we sample items according to their departure rates during the next (fictitious) even stage. During the new process when items become unavailable by being sampled, each item is as least as likely to survive a stage as before, since the sampling is only as frequent as the natural departures during the original even stages.

Note that we set each $S_k'$ to be $1$ time step shorter than the corresponding $S_k$ and make each fictitious even stage $1$ time step longer (unless the length of $S_k$ is $0$). We do this to ensure enough items will be sampled: Because of integrality constraints, an even stage may be too short (\emph{e.g.}, of length $0$) and if so, little (or nothing if the stage has length $0$) can be sampled there. This is also the reason why \textsc{Short} stages are separately considered.

\begin{algorithm}[htbp]
\DontPrintSemicolon
  $A \gets \{1, 2, \ldots, m\}$  \tcp*{$A = $ Set of available items}
  \For{ each odd stage $k = 1, 3, \ldots,$ till stage $s$}{
      $C_k \gets \varnothing$\tcp*{$C_k = $  Set of items considered in this stage}
      For each $i \in A$,  with probability $\Pr_{h_i \sim H_i}[h_i < r_{k + 1} \mid h_i \geq \ell_{k + 1} - 1]$, place in $C_k$\;
      $A \gets A \setminus C_k$\;
      \If{$r_k - \ell_k \geq 2$}{
        $p_k \gets F_V^{-1} \left(\max\left(0, 1 - \frac{m / 2^{k-1}}{r_k - \ell_k}\right)\right)$\; 
        For each of the next $r_k - \ell_k - 1$ arriving buyers, if this buyer has valuation $\geq p_k$, match to any item in $C_k$ and remove this item from $C_k$\;
          If any item departs, remove it from $A$ and $C_k$\;
        }
      }
\caption{\textsc{DepartureSimulation}: Odd Stages Version}\label{alg:EarlyStages}
\end{algorithm}

Note that Algorithm~\ref{alg:EarlyStages} can be easily modified to work with even stages instead of odd stages, and will yield the corresponding version of the theorem below with ``odd'' replaced by ``even''. In order to show Theorem~\ref{thm:main}, we will use either the odd stages or even stages algorithm depending on which yields larger expected welfare. Note that it is entirely possible that one of these stages yields very low welfare compared to the other.

\begin{theorem}
Algorithm~\ref{alg:EarlyStages} is a $15.1$-approximation to the sum of $\textsc{Pro}_k$ over odd stages $k \le s$ with $r_k - \ell_k \ge 2$. 

\label{thm:early_apx}
\end{theorem}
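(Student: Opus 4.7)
The plan is to lower bound $\textsc{Alg}_k$, the expected welfare that Algorithm~\ref{alg:EarlyStages} earns in new odd stage $S_k'$, by a constant fraction of the per-stage prophet bound $\textsc{Pro}_k\leq\min(r_k-\ell_k,\,m/2^{k-1})\cdot\E[v\mid v\geq p_k]$ from Lemma~\ref{lem:ProEarly}, and then sum over odd Long stages. Since the algorithm matches each buyer in $S_k'$ whose value is $\geq p_k$ for as long as $C_k$ is non-empty, we have $\textsc{Alg}_k=\E[\min(|C_k|,N_k)]\cdot\E[v\mid v\geq p_k]$, where $N_k$ is the number of buyers with value $\geq p_k$ among the $r_k-\ell_k-1$ arrivals during $S_k'$, so the whole task reduces to showing $\E[\min(|C_k|,N_k)]=\Omega\!\left(\min(r_k-\ell_k,\,m/2^{k-1})\right)$.

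The core is a coupling lemma. I would first pre-draw independent Bernoullis $B_j^i\sim\mathrm{Bern}(q_j^i)$ for each item $i$ and odd $j\leq s$ with $q_j^i=\Pr[h_i<r_{j+1}\mid h_i\geq\ell_{j+1}-1]$, so that $i$ ends up in $C_j$ exactly when $B_j^i=1$ and $i\in A$ at the start of $S_j'$. Joint independence of the $B_j^i$'s and $h_i$ then yields the product formula
\[
\Pr\bigl[i\in A\text{ at start of }S_k'\bigr]\;=\;\prod_{\substack{j<k\\ j\text{ odd}}}\frac{\bar{F}_i(r_{j+1})}{\bar{F}_i(r_j-1)}\cdot\bar{F}_i(\tau_{k-2}+1),
\]
where $\bar{F}_i$ is the survival function of $H_i$ and $\tau_{k-2}+1$ is the physical time at which $S_k'$ begins. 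The coupling claim is that for MHR (log-concave) $\bar{F}_i$, this expression is at least $\bar{F}_i(r_{k-1})=\Pr[h_i\geq\ell_k]$: the multisets $\{r_{j+1}\}_{j<k,\,\text{odd}}\cup\{\tau_{k-2}+1\}$ and $\{r_j-1\}_{j<k,\,\text{odd}}\cup\{r_{k-1}\}$ have equal sums---precisely because the algorithm shortens each $S_k'$ by one step---and the former is majorized by the latter, so Schur-concavity of $\sum\log\bar{F}_i$ gives the inequality. Combined with the elementary bound $\Pr[h_i\geq\ell_k]\cdot q_k^i\geq\Pr[r_k-1\leq h_i<r_{k+1}]$ (valid because $\ell_k\leq r_k-1$) and the defining relations of the stages, this yields $\E[|C_k|]>m/2^{k+1}$; a direct computation gives $\E[N_k]\geq\tfrac{1}{2}\min(r_k-\ell_k,\,m/2^{k-1})$. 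Hence $\min(\E[|C_k|],\E[N_k])$ is already a constant fraction of the target.

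To pass from means to $\E[\min(|C_k|,N_k)]$ I would apply Chernoff bounds to $|C_k|$ and $N_k$ individually (each is a sum of independent indicators across items and across buyers, respectively), concluding that with constant probability both simultaneously exceed half their means; a small correction absorbs natural departures from $C_k$ during $S_k'$ itself. Tracking the accumulated losses---the factor of $4$ between $m/2^{k-1}$ and $m/2^{k+1}$, the factor of $2$ in $\E[N_k]$, the factor of $2$ from Chernoff, and the constant success probability---and summing over odd Long stages yields the $15.1$ constant. The main obstacle is the coupling lemma itself: verifying majorization of the two argument-multisets requires a careful accounting of how the algorithm's physical time $\tau_{k-2}+1$ drifts away from $\ell_k$ as $k$ grows, and it is the $1$-step shortening of each new stage that precisely cancels this drift so Schur-concavity applies. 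A secondary difficulty is that for the last Long stage $\E[|C_k|]$ may only be a small constant, where Chernoff must be replaced by a direct second-moment estimate or absorbed into the final constant via a separate case.
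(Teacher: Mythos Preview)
Your high-level plan matches the paper's: bound the number of usable items in $C_k$ and the number $N_k$ of high-value arrivals, then apply Chernoff to each and combine. The paper does exactly this, getting $\Pr[M_k\geq\tfrac14\cdot m/2^k]>0.535$ and $\Pr[N_k\geq\tfrac14\cdot m/2^k]>0.994$, and the $15.1$ comes from $8/(0.535\cdot0.994)$.

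However, there is a genuine gap in your treatment of the items. You write $\textsc{Alg}_k=\E[\min(|C_k|,N_k)]\cdot\E[v\mid v\geq p_k]$, but this is an \emph{upper} bound on the algorithm's welfare, not a lower bound: items placed in $C_k$ may depart during $S_k'$ before any high-value buyer reaches them. The correct lower bound is $\E[\min(M_k,N_k)]\cdot\E[v\mid v\geq p_k]$, where $M_k$ counts the items in $C_k$ that \emph{survive to the end of} $S_k'$ (these items remain in $C_k$ until matched, so at least $\min(M_k,N_k)$ matches occur). Your Schur-concavity argument, as written, only controls $\Pr[i\in A\text{ at start of }S_k']$ and hence $\E[|C_k|]$, not $\E[M_k]$; the equal-sum identity you rely on uses $\tau_{k-2}+1=\mathcal S_{(k-1)/2}+1$, and it fails if you replace this by the end-of-stage time $\mathcal S_{(k+1)/2}$. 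So the ``small correction'' you allude to is not small: it is precisely where the MHR hypothesis must be invoked a second time to show that surviving the $(r_k-\ell_k-1)$ steps of $S_k'$ from the (earlier) physical time $\mathcal S_{(k-1)/2}+1$ is at least as likely as surviving from $\ell_k$ to $r_k-1$ in the old timeline. Once you add that step, you can recover $\E[M_k]\geq\sum_i\Pr[\ell_{k+1}-1\leq h_i<r_{k+1}]\geq m/2^{k+1}$, which is the bound the paper proves.

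On the method itself: the paper does not use majorization/Schur-concavity. It writes $\E[M_k]$ as a product over new-stage survival factors times the not-selected factors, and then compares each new-stage survival factor $\Pr[h_i\geq\mathcal S_j\mid h_i\geq\mathcal S_{j-1}]$ term-by-term with the corresponding old-stage factor $\Pr[h_i\geq r_{2j-1}-1\mid h_i\geq\ell_{2j-1}]$ using MHR directly (same number of steps, earlier starting time). After this substitution the product telescopes to $\Pr[\ell_{k+1}-1\leq h_i<r_{k+1}]$. This is simpler and handles the end-of-stage survival in the same stroke, whereas your majorization route requires a separate survival argument and a verification that the claimed partial-sum inequalities actually hold (you only check the total sum). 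Your approach can be made to work, but as written the departure-during-$S_k'$ issue is a real hole, and the majorization itself is asserted rather than proved.
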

\begin{proof}
We use $y^+$ to denote $\max(0, y)$. For any odd $k$ with $r_k - \ell_k \geq 2$, let the random variable $M_k$ be the number of items in the set $C_k$ that has horizon of at least $\sum_{k' = 1}^{(k+1)/2} (r_{2k' - 1} - \ell_{2k' - 1} - 1)^+$, \emph{i.e.}, the end of (new) stage $k$. We denote $\sum_{k' = 1}^{j} (r_{2k' - 1} - \ell_{2k' - 1} - 1)^+$ by $\mathcal S_j$  in the rest of the proof. 

$M_k$ is the sum of $m$ independent Bernoulli random variables, where the $i$-th one denotes whether item $i$ is in $C_k$ and has horizon of at least $\mathcal S_{(k+1)/2}$. We have
\begin{align*}
\E[M_k] &= \sum_{i = 1}^m \Pr\left[\text{item }i\text{ is in }C_k\text{ and has horizon of at least } \mathcal S_{(k+1)/2} \right]\\
&= \sum_{i = 1}^m \Pr_{h_i \sim H_i}\left[h_i \geq \mathcal S_{(k+1)/2} \right] \cdot  \left(\prod_{j = 1}^{(k-1)/2}\left(1 - \Pr_{h_i \sim H_i}[h_i < r_{2j} \mid h_i \geq \ell_{2j} - 1]\right)\right) \cdot  \\
& \quad \quad  \Pr_{h_i \sim H_i}[h_i < r_{k + 1} \mid h_i \geq \ell_{k + 1} - 1],
\end{align*}
where we calculate the probability that item $i$ has horizon of at least $\sum_{k' = 1}^{(k+1)/2} (r_{2k' - 1} - \ell_{2k'-1} - 1)^+$, was never selected into $C_{2j - 1}$'s during previous stages $2j - 1 < k$, and was selected into $C_k$. Further simplifying it, we have
\begin{align*}
\E[M_k] &= \sum_{i = 1}^m \left(\prod_{j = 1}^{(k+1)/2}\Pr_{h_i \sim H_i}\left[h_i \geq \mathcal S_{j} \ \middle| \ h_i \geq \mathcal S_{j-1} \right]\right) \cdot  \left(\prod_{j = 1}^{(k-1)/2} \Pr_{h_i \sim H_i}[h_i \geq r_{2j} \mid h_i \geq \ell_{2j} - 1]\right) \cdot \\
& \quad \quad  \Pr_{h_i \sim H_i}[h_i < r_{k + 1} \mid h_i \geq \ell_{k + 1} - 1]
\end{align*}
Since the MHR condition implies the item is more likely to survive in earlier time steps, we have:
\begin{align*}
\E[M_k]  &\geq \sum_{i = 1}^m \left(\prod_{j = 1}^{(k+1)/2}\Pr_{h_i \sim H_i}\left[h_i \geq r_{2j - 1} - 1 \ \middle| \ h_i \geq \ell_{2j - 1}\right]\right) \cdot \left(\prod_{j = 1}^{(k-1)/2} \Pr_{h_i \sim H_i}[h_i \geq r_{2j} \mid h_i \geq \ell_{2j} - 1]\right) \cdot \\
& \quad \quad   \Pr_{h_i \sim H_i}[h_i < r_{k + 1} \mid h_i \geq \ell_{k + 1} - 1]\\
&= \sum_{i = 1}^m \left(\prod_{j = 1}^{(k+1)/2}\Pr_{h_i \sim H_i}\left[h_i \geq \ell_{2j}  - 1 \ \middle| \ h_i \geq \ell_{2j - 1}\right]\right) \cdot \left(\prod_{j = 1}^{(k-1)/2} \Pr_{h_i \sim H_i}[h_i \geq l_{2j + 1} \mid h_i \geq \ell_{2j} - 1]\right) \cdot \\
& \quad \quad \Pr_{h_i \sim H_i}[h_i < r_{k + 1} \mid h_i \geq \ell_{k + 1} - 1]\\
&= \sum_{i = 1}^m \Pr_{h_i \sim H_i}[\ell_{k + 1} - 1 \leq h_i < r_{k + 1}]
\end{align*}
Now, $\sum_{i = 1}^m \Pr_{h_i \sim H_i}[h_i \geq \ell_{k + 1} - 1] \geq \frac{m}{2^{k}}$ and $\sum_{i = 1}^m \Pr_{h_i \sim H_i}[h_i \geq r_{k + 1}] \leq \frac{m}{2^{k + 1}}$. Thus,
\[
\E[M_k] \geq \left(\sum_{i = 1}^m \Pr_{h_i \sim H_i}[h_i \geq \ell_{k + 1} - 1]\right) - \left(\sum_{i = 1}^m \Pr_{h_i \sim H_i}[h_i \geq r_{k + 1}]\right) \geq \frac{m}{2^{k + 1}}.
\]
Note that $\frac{m}{2^{k + 1}} \geq \frac{m}{2^s}$ for $k < s$. For $k = s$, $\E[M_k] \geq \sum_{i = 1}^m \Pr_{h_i \sim H_i}[h_i \geq \ell_{k + 1} - 1] \geq \frac{m}{2^k} = \frac{m}{2^s}$. Thus, $\E[M_k] \geq \frac{m}{2^s} > 5$ for any $k \leq s$. By Chernoff bound,
\[
\Pr\left[M_k \geq \frac{1}{4} \cdot \frac{m}{2^k}\right] \geq 1 - \left(\frac{e^{-0.5}}{0.5^{0.5}}\right)^5 > 0.535.
\]

Now let $p_k$ be $F_V^{-1} \left(\max\left(0, 1 - \frac{m / 2^{k-1}}{r_k - \ell_k}\right)\right)$ where $F_V$ is the CDF of distribution $V$, just as in Algorithm~\ref{alg:EarlyStages}. Let the random variable $N_k$ denote the number of buyers with valuation of at least $p_k$ among the next $r_k - \ell_k - 1$ buyers. We have
\[
\E[N_k] = (r_k - \ell_k - 1) \cdot \min\left(1, \frac{m/2^{k-1}}{r_k - \ell_k}\right).
\]

If $\frac{m/2^{k-1}}{r_k - \ell_k} \geq 1$, then $p_k = -\infty$ and $N_k = r_k - \ell_k - 1$ with probability $1$. In this case, Algorithm~\ref{alg:EarlyStages} gets at least $\min(M_k, r_k - \ell_k - 1) \E[V] \geq \min\left(M_k, \frac{1}{2}(r_k - \ell_k)\right) \E[V]$ in this stage. Since $M_k \geq \frac{1}{4} \cdot \frac{m}{2^k} \geq \frac{1}{8} \cdot (r_k - \ell_k)$ with probability at least $0.535$, we know Algorithm~\ref{alg:EarlyStages} gets at least $\frac{0.535}{4} \cdot (r_k - \ell_k) \cdot \E[V]$ and thus is an $\frac{4}{0.535} < 8$-approximation during the stage.

If $\frac{m/2^{k-1}}{r_k - \ell_k} < 1$, then $r_k - \ell_k > 10$ and $\E[N_k] = (r_k - \ell_k - 1) \cdot \frac{m/2^{k-1}}{r_k - \ell_k} > 0.9 \cdot \frac{m}{2^{k - 1}} = 1.8 \cdot \frac{m}{2^k} > 9$. By Chernoff bound,
\[
\Pr\left[N_k \geq \frac{1}{4} \cdot \frac{m}{2^k}\right] \geq 1 - \left(\frac{e^{-\left(1 - \frac{1}{4 \times 1.8}\right)}}{\left(\frac{1}{4 \times 1.8}\right)^{\frac{1}{4 \times 1.8}}}\right)^9 > 0.994.
\]
When $\min(N_k, M_k) \geq \frac{1}{4} \cdot \frac{m}{2^k}$, Algorithm~\ref{alg:EarlyStages} gets at least $\frac{1}{8}$ the benchmark during the stage. Therefore, it is an $\frac{8}{0.535 \cdot 0.994} < 15.1$-approximation.
\end{proof}

\subsection{Approximating {\sc ProShort}}
\label{sec:short}
\noindent In this section, we deal with length-$1$ stages using Algorithm {\sc BlindMatch}, that simply matches each arriving buyer $i$ to any available item.

\begin{theorem}
Algorithm {\sc BlindMatch} is a $2.3$-approximation to $\sum_{k = 1}^s \textsc {Pro}_k \cdot \mathbbm{1}_{(r_k - \ell_k = 1)} = \E[V] \cdot |\{k \in \{1, 2, \ldots, s\} \mid r_k - \ell_k = 1\}|$.
\label{thm:short_apx}
\end{theorem}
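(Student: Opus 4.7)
The plan is to reduce the inequality to showing $\E[N_{\text{match}}] \ge B/2.3$, where $N_{\text{match}}$ denotes the total number of matches \textsc{BlindMatch} makes over the time horizon and $B := |\{k \le s : r_k - \ell_k = 1\}|$ is the number of short stages. The reduction uses Lemma~\ref{lem:ProEarly}: for each short stage $k \le s$, since $m/2^{k-1} \ge m/2^{s-1} \ge 10 \ge 1$, the threshold $p_k$ degenerates to accepting every valuation, so $\textsc{Pro}_k \le \E[V]$, giving the right-hand side $\le B \cdot \E[V]$. Meanwhile, \textsc{BlindMatch}'s expected welfare is $\E[V] \cdot \E[N_{\text{match}}]$ by i.i.d.\ valuations, so the claim follows once $\E[N_{\text{match}}] \ge B/2.3$ is established.

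For the lower bound on $\E[N_{\text{match}}]$, I would couple \textsc{BlindMatch} with the \emph{no-match} process in which items expire only at their horizons. Let $N^{\text{nm}}(t) = |\{i : h_i \ge t\}|$ and $N^{\text{bm}}(t)$ be the corresponding counts under each process. Since \textsc{BlindMatch} removes at most one item per time step, $N^{\text{bm}}(t) \ge N^{\text{nm}}(t) - (t-1)$, and $N_{\text{match}}$ equals the number of $t$ with $N^{\text{bm}}(t) \ge 1$. The key structural input is that by the minimality in the definition of $r_k$, for every $k \le s$ we have $A(\ell_k - 1) > m/2^k \ge m/2^s \ge 5$. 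In particular, $N^{\text{nm}}(\ell_k)$ is a sum of independent Bernoullis with mean exceeding $5$, and a Chernoff bound yields $\Pr[N^{\text{nm}}(\ell_k) \ge c] \ge 1 - \delta$ for suitable constants $c, \delta$, which via the coupling gives a constant-probability match guarantee at each short stage whose $\ell_k$ is not too large.

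The main obstacle is short stages with large $\ell_k$, where the correction $-(t-1)$ in the coupling could dominate. To handle this, I would observe that large $\ell_k$ forces many preceding stages (including long stages of length $\ge 2$), during which \textsc{BlindMatch} has already been accruing matches that also count towards $N_{\text{match}}$. Aggregating over stages, the total expected match count is at least a constant fraction of $\max(B, \ell_{s+1}-1) \ge B$, using that each long stage contributes matches with constant probability by the same Chernoff argument applied at its start, and using the paper's remark that $A(\ell_{s+1}-2) > m/2^s \ge 5$ to handle the tail. Closing to the specific constant $2.3$ requires a careful Chernoff-parameter tuning analogous to the $0.535 \cdot 0.994$ product appearing in the proof of Theorem~\ref{thm:early_apx}, with $2.3$ arising roughly as the inverse of the product of such tail guarantees.
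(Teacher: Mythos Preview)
Your reduction to showing $\E[N_{\text{match}}] \ge B/2.3$ is correct, as is the coupling inequality $N^{\text{bm}}(t) \ge N^{\text{nm}}(t) - (t-1)$. However, your per-stage strategy creates the very obstacle you then struggle to resolve, and the paper sidesteps it entirely with a much simpler idea.

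The paper applies a Chernoff bound at a \emph{single} time point $t = \lceil B/2 \rceil$. Since each stage has length at least $1$, at most $t$ stages lie in $\{1,\ldots,t\}$, so at least $\lfloor B/2 \rfloor$ short stages start strictly after time $t$; all have index $\le s$, which forces the expected number of items with horizon $\ge t$ to exceed $5 \cdot 2^{\lfloor B/2 \rfloor} \ge 5\lceil B/2\rceil$. One lower-tail Chernoff bound then gives $\Pr[N^{\text{nm}}(t) \ge \lceil B/2\rceil] > 0.9$, and on that event \textsc{BlindMatch} matches all of the first $t$ buyers (exactly your coupling with $t-1 < \lceil B/2\rceil$). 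Hence $\E[N_{\text{match}}] \ge 0.9 \cdot \lceil B/2\rceil$, and $2/0.9 < 2.3$.

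By contrast, your attempt to certify a match at each individual $\ell_k$ via $N^{\text{nm}}(\ell_k) \ge \ell_k$ genuinely fails for late short stages, and your workaround (``preceding long stages already accrue matches'') is not an argument: those matches deplete the same pool of items, so you cannot simply add their contributions without tracking the joint depletion. The claim that $\E[N_{\text{match}}]$ is a constant fraction of $\max(B,\ell_{s+1}-1)$ is plausible but unproved here, and even if true would require substantially more work to extract the constant. Finally, your guess that $2.3$ arises from a product of two Chernoff tails (as in Theorem~\ref{thm:early_apx}) is off: it is simply $2/0.9$ from a single tail bound.
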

\begin{proof}
Let $z = |\{k \in \{1, 2, \ldots, s\} \mid r_k - \ell_k = 1\}|$, the number of length-$1$ stages. Consider the time $t = \left\lceil \frac{z}{2} \right\rceil$. Since there are still at least $\left\lfloor \frac{z}{2} \right\rfloor$ length-$1$ stages after time $t$, at least $5 \cdot 2^{\left\lfloor \frac{z}{2} \right\rfloor} \geq 5 \cdot \left\lceil \frac{z}{2} \right\rceil$ items in expectation have horizons of at least $t$, by the definition of the stages. Using Chernoff bound, the probability that at least $\left\lceil \frac{z}{2} \right\rceil$ items with horizons of at least $t$ is greater than $1 - \left(\frac{e^{-0.8}}{0.2^{0.2}}\right)^5 > 0.9$. If this happens, the first $t$ items will be matched. Therefore, Algorithm {\sc BlindMatch} is a $\frac{2}{0.9} < 2.3$-approximation to $\E[V] \cdot z$, completing the proof.
\end{proof}

\subsection{Approximating {\sc ProFinal}}
\label{sec:final}
\noindent We now approximate {\sc ProFinal} from Lemma~\ref{lem:ProFinal}. 
$\sum_{i = 1}^m \Pr_{h_i \sim H_i}[h_i \geq \ell_{s+1}] \leq 10$ by the definition of the stages. We run Algorithm~\ref{alg:SingleItem}. We randomly sample an item and focus on the item in our algorithm. The probability that item $i$ is sampled is proportional to $\Pr_{h_i \sim H_i}[h_i \geq \ell_{s+1}]$. If item $i$ is sampled, we run an algorithm for the single-item setting (lines $3$ and $4$ in Algorithm~\ref{alg:SingleItem}). The single-item policy is analyzed in Section~\ref{sec:mhrSingle} where it is shown to achieve welfare at least $\frac{1}{2} \cdot \textsc{ProSingle}_i$. 

\begin{algorithm}[htbp]
\DontPrintSemicolon
  For $i = 1, 2, \ldots, m$, set   $q_i \gets \Pr_{h_i \sim H_i}[h_i \geq \ell_{s+1}]$\;
    $i^* \gets $ item $i \in \{1, 2, \ldots, m\}$ with probability $\frac{q_i}{\sum_{i = 1}^m q_i}$\;
    Set the reserve price $p$ so that $\Pr_{v \sim V}[v \geq p] = \frac{1}{\E[H_{i^*}]}$\;
   For each arriving buyer, try selling item $i^*$ with reserve price $p$\;
\caption{\textsc{SingleItem}}\label{alg:SingleItem}
\end{algorithm}

\begin{theorem}
Algorithm~\ref{alg:SingleItem} is a $20$-approximation of $\textsc{ProFinal}$ in expectation.
\label{thm:final_apx}
\end{theorem}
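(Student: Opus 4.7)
My plan is to combine three ingredients: (i) the upper bound on $\textsc{ProFinal}$ from Lemma~\ref{lem:ProFinal}, (ii) the fact that $\sum_{i=1}^m q_i \le 10$ which follows directly from the definition of the final stage ($s$ was chosen so that $\tfrac{m}{2^s} \le 10$), and (iii) the single-item competitive ratio bound of $2-1/\mu \le 2$ from Theorem~\ref{thm:single} applied to the sampled item $i^*$. The proof is essentially a one-line calculation once these pieces are lined up.

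First, I will argue about the expected welfare of Algorithm~\ref{alg:SingleItem} by conditioning on $i^*$. Given $i^* = i$, the algorithm ignores all other items and runs the fixed reserve-price scheme with price $p$ chosen so that $\Pr_{v \sim V}[v \ge p] = 1/\E[H_i]$, on a stream of i.i.d.\ buyers against a single item with MHR horizon $H_i$. By Theorem~\ref{thm:single}, this policy achieves expected welfare at least $\tfrac{1}{2-1/\mu_i}\cdot \textsc{ProSingle}_i \ge \tfrac{1}{2}\cdot\textsc{ProSingle}_i$, where $\mu_i = \E[H_i]$. Therefore, averaging over the random choice of $i^*$,
\begin{align*}
\E[\textsc{Alg}] \;\ge\; \sum_{i=1}^m \frac{q_i}{\sum_{j=1}^m q_j} \cdot \frac{1}{2}\,\textsc{ProSingle}_i
\;=\; \frac{1}{2\sum_{j=1}^m q_j}\sum_{i=1}^m q_i \cdot \textsc{ProSingle}_i.
\end{align*}

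Second, I will apply the two upper bounds. By Lemma~\ref{lem:ProFinal}, the numerator satisfies $\sum_{i=1}^m q_i \cdot \textsc{ProSingle}_i \ge \textsc{ProFinal}$. For the denominator, by the stage construction $\sum_{i=1}^m q_i = \sum_{i=1}^m \Pr[h_i \ge \ell_{s+1}] = \E[\text{number of items surviving into the final stage}] \le \tfrac{m}{2^s} \le 10$. Combining gives
\begin{align*}
\E[\textsc{Alg}] \;\ge\; \frac{1}{2\cdot 10}\,\textsc{ProFinal} \;=\; \frac{1}{20}\,\textsc{ProFinal},
\end{align*}
which is the claimed $20$-approximation.

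I do not anticipate a real obstacle here: the main structural work has been done in setting up the sampling distribution to be proportional to $q_i$ (so that sampling cancels exactly against the weighted sum in Lemma~\ref{lem:ProFinal}), and in bounding $\sum_i q_i$ by the stage-splitting construction. The only subtlety is that the application of Theorem~\ref{thm:single} is forward-referenced to Section~\ref{sec:mhrSingle}; I will just cite it as a black box here, noting that its guarantee holds against $\textsc{ProSingle}_i$ for every $i$, which is exactly what appears in the per-item summand of Lemma~\ref{lem:ProFinal}.
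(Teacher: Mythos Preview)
Your proposal is correct and follows essentially the same argument as the paper: condition on the sampled item, apply the single-item $2$-competitive guarantee (the paper cites Theorems~\ref{thm:sing_upp} and~\ref{thm:MHR-Single}, you cite their packaged form Theorem~\ref{thm:single}), then use $\sum_i q_i \le 10$ and Lemma~\ref{lem:ProFinal} to conclude. The only cosmetic difference is that you make the intermediate step $2-1/\mu_i \le 2$ explicit, which the paper leaves implicit.
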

\begin{proof}
By Theorem~\ref{thm:sing_upp} and Theorem~\ref{thm:MHR-Single}, if item $i^* = i$, the algorithm gets $\frac{1}{2} \cdot \textsc{Pro}_i$ in expectation. Thus, the expected welfare achieved by algorithm~\ref{alg:SingleItem} is at least
\begin{align*}
&\sum_{i = 1}^m \frac{\Pr_{h_i \sim H_i}[h_i \geq \ell_{s+1}]}{\sum_j \Pr_{h_j \sim H_j}[h_j \geq \ell_{s+1}]} \cdot \frac{1}{2} \cdot \textsc{ProSingle}_i\\
\geq &\sum_{i = 1}^m \frac{\Pr_{h_i \sim H_i}[h_i \geq \ell_{s+1}]}{10} \cdot \frac{1}{2} \cdot \textsc{ProSingle}_i \geq \frac{1}{20} \cdot \textsc{ProFinal}. \qedhere
\end{align*}
\end{proof}

\subsection{Proof of Theorem~\ref{thm:main}}
\label{sec:mainproof}
Now we are ready to prove our main theorem.
\begin{proof}[Proof of Theorem~\ref{thm:main}]
To summarize our previous discussion:
\begin{enumerate}[label={(\arabic*)}]
\item Theorem~\ref{thm:early_apx} yields a $15.1$-approximation to $\sum_k \textsc{Pro}_k$, where the sum is over odd stages $k \le s$ with $r_k - \ell_k \ge 2$.
\item If we replace ``odd'' with ``even'' in Theorem~\ref{thm:early_apx} and the corresponding algorithm, we have a $15.1$-approximation $\sum_k \textsc{Pro}_k$ over even stages $k$ with $r_k - \ell_k \ge 2$.
\item Theorem~\ref{thm:short_apx} is a $2.3$-approximation to $\sum_k \textsc{Pro}_k$ over stages $k \le s$ with $r_k - \ell_k = 1$.
\item Theorem~\ref{thm:final_apx} yields a $20$-approximation to {\sc ProFinal}.
\end{enumerate}

An algorithm can do one of (1) to (4) with probability $\frac{15.1}{52.5}, \frac{15.1}{52.5}, \frac{2.3}{52.5}$ and $\frac{20}{52.5}$ respectively, yielding a $52.5$-approximation to {\sc Pro}.
\end{proof}

\section{Prophet Inequality for Single Item with MHR Horizon}
\label{sec:mhrSingle}

\noindent In this section, we consider the case where there is $m = 1$ item, and present a proof of Theorem~\ref{thm:single}. The algorithm also serves as our approximation for {\sc ProSingle}, which we use for the overall algorithm with multiple items

We show that the following fixed-price {\em balancing} scheme is a $2$-approximation, and this bound is tight for geometric distributions: 

\begin{quote}
Pretend the item departs uniformly over time at rate $1/\mu$, where $\mu = \E[H]$. Choose a  price $p$ s.t. the rate of acceptance of buyers matches the rate of departure of the item.
\end{quote}

We bound the performance of this policy by using a simple linear programming upper bound on {\sc Pro} that only uses expected values. Though the relaxation is simple, just as in Section~\ref{sec:relax}, it brings out the key insight that the upper bound also behaves like a balancing scheme, except it assumes the item lasts forever when performing the matching. Surprisingly, such a simple relaxation yields the worst-case optimal bound over all MHR distributions.



\begin{theorem} 
\label{thm:sing_upp}
Let $\alpha = 1 - \E_{h \sim H}[(1-\mu^{-1})^{h}]$. Then for $m=1$ items, there is a fixed pricing policy that is $\frac{1}{\alpha}$-competitive. This policy sets the price $p$ such that $\Pr_{X \sim V}[X \geq p] = \frac{1}{\mu}$ where $\mu = \E[H]$.
\end{theorem}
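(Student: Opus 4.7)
The plan is to sandwich both the prophet's welfare and the algorithm's welfare against the same quantity $\E[V \mid V \geq p]$, so that their ratio collapses exactly to the stated $1/\alpha$. No step will actually invoke the MHR condition on $H$; MHR enters only afterwards, to lower-bound $\alpha$ and thereby recover the $2 - 1/\mu$ form promised in Theorem~\ref{thm:single}.

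For the upper bound on the prophet, I would start from
\[
\textsc{Pro} \;=\; \E_{h, X}\Bigl[\max_{1 \leq t \leq h} X_t\Bigr],
\]
since the prophet knows the realized horizon $h$ and all valuations and picks the best buyer before departure. Applying the Krengel--Sucheston style inequality $\max_t X_t \leq p + \sum_t (X_t - p)^+$ and then taking expectations — using that the horizon $h$ is independent of the i.i.d.\ valuation sequence, so Wald's identity applies — yields
\[
\textsc{Pro} \;\leq\; p + \mu \cdot \E[(V - p)^+].
\]
The rate-matching choice $\Pr[V \geq p] = 1/\mu$, combined with the identity $\E[(V-p)^+] = \Pr[V \geq p] \cdot (\E[V \mid V \geq p] - p) = (1/\mu)(\E[V \mid V \geq p] - p)$, then produces the clean collapse $\textsc{Pro} \leq \E[V \mid V \geq p]$.

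For the algorithm I would argue directly. Under fixed price $p$, the item is sold iff the first time $\tau_s$ at which a buyer has $X_{\tau_s} \geq p$ satisfies $\tau_s \leq h$. Since the valuation stream is independent of the horizon, $\tau_s$ is geometric with parameter $1/\mu$ and independent of $h$, so conditioning on $h$ gives $\Pr[\tau_s \leq h] = \E_h[1 - (1 - 1/\mu)^h] = \alpha$. Because $X_{\tau_s}$ has the law of $V$ given $V \geq p$ regardless of the value of $\tau_s$, it is independent of the sale event, hence
\[
\textsc{Alg} \;=\; \alpha \cdot \E[V \mid V \geq p].
\]
Dividing by the prophet upper bound yields the claimed $1/\alpha$ competitive ratio.

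The argument is clean and there is no substantive obstacle. The conceptual point worth highlighting is that the rate-matching choice $\Pr[V \geq p] = 1/\mu$ is precisely what makes the additive $p$ in the prophet bound cancel against $\mu \cdot \E[(V-p)^+]$, leaving exactly the quantity (up to the factor $\alpha$) that the algorithm extracts upon a successful sale. If the distribution $V$ is discrete and no $p$ attains $\Pr[V \geq p] = 1/\mu$ exactly, one uses the standard tie-breaking randomization noted in the footnote after Lemma~\ref{lem:ProEarly}.
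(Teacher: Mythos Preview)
Your proof is correct, and the algorithm analysis is identical to the paper's. The prophet upper bound, however, is derived differently. The paper writes an LP relaxation for the prophet (variables $y(v)$ encode the probability of matching to a buyer with value $v$, subject to $\sum_v y(v)\le 1$ and $y(v)\le \mu\Pr[X=v]$) and then dualizes via a Lagrange multiplier $\lambda$, obtaining the bound $\lambda + \mu\,\E[(X-\lambda)^+]$ for every $\lambda$, which at $\lambda=p$ collapses to $\E[X\mid X\ge p]$. You bypass the LP entirely: the pointwise inequality $\max_{t\le h} X_t \le p + \sum_{t\le h}(X_t-p)^+$ followed by Wald produces the \emph{same} bound $p+\mu\,\E[(V-p)^+]$ in one line. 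So the two routes are really dual views of one inequality; yours is the more elementary derivation, while the paper's LP framing is the template that they reuse in Section~\ref{sec:relax} (Lemma~\ref{lem:ProEarly}) for the multi-item stage-wise relaxation, which is presumably why they present it that way here.
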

\begin{proof}
First we find an upper bound for \textsc{Pro}. Let $X$ be a random variable with distribution $V$. Consider the following LP:
\begin{equation*}
\begin{array}{ll@{}ll}
\text{maximize}  & \displaystyle\sum_v y(v) \cdot v &\\[3ex]
\text{subject to}& \displaystyle\sum_v y(v) \le 1,\\[3ex]
& \displaystyle y(v) \leq \mu \cdot \Pr_{X \sim V}[X = v], \quad&\forall v.
\end{array}
\end{equation*}
Variable $y(v)$ is the probability that a buyer with realized value $v$ is chosen by prophet. The first constraint requires the item to be sold at most once in expectation. The second constraint says each value can be chosen only when it appears. Both of the constraints are relaxations as they should hold for any realization while the constraints are in expectation. The optimal objective is thus an upper bound for the expected value of the prophet.

Let $\lambda$ be the Lagrange multiplier associated with the first constraint. The partial Lagrangian of the LP is: 
\begin{align*}
\mathcal{L}(\lambda) &= \lambda + \sum_v y(v) \cdot (v - \lambda),\\
y(v) &\leq \mu \cdot \Pr_{X \sim V}[X = v], \quad\forall v.
\end{align*}

The partial Lagrangian is decoupled for each value $v$ and is maximized when $y(v) = \mu \cdot \Pr_{X \sim V}[X = v]$ for any $v \geq \lambda$ and $y(v)=0$ otherwise. For any $\lambda$, this gives us an upper bound on the prophet's welfare. Let $p$ be the value such that $\Pr_{X \sim V}[X \geq p] = \frac{1}{\mu}$. If we set $\lambda = p$, we get the following upper bound for the prophet's value:
\begin{equation*}
\textsc{Pro} \leq \sum_{v \geq p} \mu \cdot v \cdot \Pr[X = v] = \E_{X \sim V}[X \mid X \geq p].
\end{equation*}

Essentially, the prophet pretends that the horizon is infinite and it can always find a buyer with value at least $p$. Now we look at $\textsc{Alg}$ which is an algorithm with a single price $p$. The algorithm has to also consider the event that the horizon ends before the item is matched.
\begin{align*}
\textsc{Alg} &= \E_{X \sim V}[X \mid X \geq p] \cdot  \Pr[\text{a value at least } p \text{ was seen during the time horizon}]\\
&= \E_{X \sim V}[X \mid X \geq p] \cdot \E_{h \sim H} [1 - (1 - \mu^{-1})^h].
\end{align*}
Therefore,
\begin{equation*}
\frac{\textsc{Pro}}{\textsc{Alg}} \leq \E_{h \sim H} [1 - (1 - \mu^{-1})^h]^{-1}. \qedhere
\end{equation*}
\end{proof}


Now, we show that for MHR horizons, this algorithm is $(2-\mu^{-1})$-competitive. The key idea is to use second order stochastic dominance to show that the upper bound is maximized for geometric distributions with the same mean. Somewhat surprisingly, we also show in Theorem~\ref{theo:geoTight}  that this result is tight in the sense that for geometric distributions, no online policy can do better. 

\begin{theorem} 
\label{thm:MHR-Single}
For any MHR distribution with mean $\mu$, $\E_{h \sim H} [1 - (1 - \mu^{-1})^h]^{-1} \leq 2 - \mu^{-1}.$
\end{theorem}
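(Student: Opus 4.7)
The plan is to reduce the claim to a weighted-sum comparison, then exploit the log-concavity of the survival function that MHR provides. Write $p = \mu^{-1}$, $c = 1-p$, and $q_k := \Pr_{h \sim H}[h \geq k]$, so $q_1 = 1$ and $\sum_{k \geq 1} q_k = \E[h] = 1/p$. A single Abel summation gives $\E[1 - c^h] = p \sum_{k \geq 1} q_k\, c^{k-1}$, and $\sum_{k \geq 1} c^{2(k-1)} = 1/(p(2-p))$ by geometric series. Hence the inequality $\E[1-c^h]^{-1} \leq 2-p$ is equivalent to
\[ \sum_{k \geq 1} q_k\, c^{k-1} \;\geq\; \sum_{k \geq 1} c^{2(k-1)}. \]

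Next I would normalize by the geometric benchmark. Define $g(k) := q_k / c^{k-1}$, so the mean constraint becomes the balanced identity $\sum_k (g(k)-1)\, c^{k-1} = 0$, and the target becomes $\sum_k (g(k)-1)\, c^{2(k-1)} \geq 0$. The MHR assumption says $q_{k+1}/q_k$ is non-increasing, hence $g(k+1)/g(k) = (q_{k+1}/q_k)/c$ is non-increasing, i.e., $g$ is log-concave on $\{1,2,\ldots\}$ with $g(1) = 1$.

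The crux is then a single-crossing observation: either $g \equiv 1$ (the exactly-geometric case, where both sides of the target inequality coincide), or there is a unique threshold $k_0 > 1$ with $g(k) \geq 1$ for $k < k_0$ and $g(k) < 1$ for $k \geq k_0$. This follows because log-concavity makes $g$ unimodal, so the superlevel set $\{g \geq 1\}$ is an interval containing $1$; if this interval were all of $\mathbb{N}$, the balanced identity would force $g \equiv 1$.

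Once the single-crossing property is in hand, the proof ends with a Chebyshev-style rearrangement. Since $c \in (0,1)$, the weights $c^{k-1}$ are decreasing in $k$, so $c^{k-1} \geq c^{k_0-1}$ wherever $g(k) - 1 \geq 0$ and $c^{k-1} \leq c^{k_0-1}$ wherever $g(k) - 1 < 0$; in either case one gets $(g(k)-1)\, c^{2(k-1)} \geq c^{k_0-1}(g(k)-1)\, c^{k-1}$. Summing over $k$ and invoking $\sum_k (g(k)-1)\, c^{k-1} = 0$ yields the target inequality. The main obstacle in this plan is really just the single-crossing step: all the weight of the argument rests on deducing from log-concavity (plus the balanced identity and $g(1)=1$) that $g-1$ changes sign at most once; after that, the Chebyshev comparison is essentially mechanical.
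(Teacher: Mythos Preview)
Your proof is correct and takes a genuinely different route from the paper's. The paper proceeds via second-order stochastic dominance: it first establishes (Lemma~\ref{lem:geo_dom}, by a compactness-plus-local-perturbation argument) that the geometric distribution with mean $\mu$ is second-order stochastically dominated by every MHR distribution of the same mean, and then invokes the general principle (Proposition~\ref{prop:convex}) that a concave test function has its expectation minimized under the dominated law. You instead Abel-sum to reduce the claim to $\sum_k (g(k)-1)\,c^{2(k-1)} \geq 0$ subject to $\sum_k (g(k)-1)\,c^{k-1} = 0$, observe that log-concavity of $g$ together with $g(1)=1$ forces $g-1$ to single-cross from nonnegative to negative, and finish with a one-line Chebyshev weighting comparison. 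Your route is more elementary and self-contained: it sidesteps the somewhat delicate compactness step in the paper's lemma and needs no SOSD machinery. At a deeper level the two are related---your single-crossing step is precisely the Karlin--Novikoff cut criterion specialized to this pair of tail sequences, so you are implicitly proving the dominance relation, but only to the extent needed for the single concave function $h \mapsto 1-(1-\mu^{-1})^h$; the paper's version is more modular (its lemma would handle any concave test function) at the cost of a heavier auxiliary result. One minor point: your reduction divides by $c^{k-1}$, so the degenerate case $\mu=1$ (where $c=0$) should be disposed of separately; it is of course trivial.
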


In order to prove the above theorem, we  use \emph{second-order stochastic dominance}.
\begin{definition}
	\rm
	Let $A$ and $B$ be two probability distributions on $\mathbb{R}$. Let $F_A$ be the cumulative distribution function of $A$ and $F_B$ be the CDF of $B$. We say $A$ is \emph{second-order stochastically dominant} over $B$ if for all $x \in \mathbb{R}$,
	\begin{equation*}
	\int_{-\infty}^x (F_B(t) - F_A(t)) \mathrm{d}t \geq 0.
	\end{equation*}
	\label{def:second}
\end{definition}

\begin{prop}
	\label{prop:convex}
	If distribution $A$ is second-order stochastically dominant over $B$, and $A$ and $B$ have the same mean, then for any convex function $f: \mathbb{R} \to \mathbb{R}$, $\E_{x \sim B}[f(x)] \geq \E_{x \sim A}[f(x)]$.
\end{prop}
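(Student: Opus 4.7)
My plan is to establish the inequality via integration by parts (applied twice), exploiting the auxiliary function that appears in Definition~\ref{def:second} itself. Write
$$
\E_{x\sim B}[f(x)] - \E_{x\sim A}[f(x)] = \int_{\mathbb{R}} f(x)\, d(F_B-F_A)(x),
$$
and set $G(x) := \int_{-\infty}^x (F_B(t)-F_A(t))\,dt$, so that $G' = F_B - F_A$ and $G \geq 0$ by the second-order stochastic dominance hypothesis.

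I would first integrate by parts once, using $\lim_{x\to\pm\infty}(F_B(x)-F_A(x)) = 0$ to kill the boundary term, which yields
$$
\E_{x\sim B}[f(x)] - \E_{x\sim A}[f(x)] = -\int_{\mathbb{R}} G'(x)\,f'(x)\,dx.
$$
A second integration by parts then produces $-[G(x)f'(x)]_{-\infty}^{+\infty} + \int G(x)f''(x)\,dx$. The equal-mean hypothesis enters here: by the tail formula $\E[X] = \int_0^\infty (1-F(x))\,dx - \int_{-\infty}^0 F(x)\,dx$, one has $G(+\infty) = \E_{x\sim A}[x] - \E_{x\sim B}[x] = 0$, so this second boundary term vanishes as well. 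What remains is
$$
\E_{x\sim B}[f(x)] - \E_{x\sim A}[f(x)] = \int_{\mathbb{R}} G(x)\, f''(x)\,dx \geq 0,
$$
since $G \geq 0$ pointwise and $f'' \geq 0$ by convexity of $f$.

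The main obstacle is regularity: a general convex $f$ need not be twice differentiable, and $A, B$ may have unbounded support. I would handle this by interpreting $f''$ as the nonnegative Radon measure that a convex function induces on $\mathbb{R}$ (its second distributional derivative), so that the last display is an integral of a nonnegative function against a nonnegative measure. Alternatively, one can reduce to the smooth, compactly supported case by mollifying $f$ and truncating $A,B$, then passing to the limit via monotone convergence. A slicker but less elementary route would bypass integration by parts altogether by invoking the Rothschild--Stiglitz characterization of equal-mean second-order dominance as a mean-preserving spread, $B \stackrel{d}{=} A + \varepsilon$ with $\E[\varepsilon \mid A]=0$, and applying Jensen's inequality to the convex $f$ conditionally on $A$.
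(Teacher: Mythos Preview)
Your argument is correct and is one of the standard proofs of this classical fact. Note, however, that the paper does not actually prove Proposition~\ref{prop:convex}: it is simply stated as a well-known property of second-order stochastic dominance, with no argument given. So there is no ``paper's own proof'' to compare against; you have supplied a proof where the paper chose to cite folklore. Your handling of the regularity issues (interpreting $f''$ as a nonnegative measure, or mollifying and truncating) is appropriate, and the alternative Rothschild--Stiglitz/Jensen route you mention is indeed the slicker way to see it.
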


We now use second order stochastic dominance to show the following.
\begin{lemma}
\label{lem:geo_dom}
Geometric distribution with mean $\mu$ is second-order stochastically dominated by any other MHR horizon distribution with the same mean.
\end{lemma}
\begin{proof}
Let $\phi_c(x) : \mathbb{N}^+ \to \mathbb{R}$ be the following convex function:
	\[ \phi_c(x)  =
	\begin{cases}
	c - x       & \quad \text{if } x \leq c\\
	0  & \quad \text{if } x > c
	\end{cases}
	\]
	where $c$ is a positive integer. Let $G$ be the geometric distribution with mean $\mu$. From Definition~\ref{def:second}, the lemma holds if and only if $\E_{x \sim D}[\phi_c(x)] \leq \E_{x \sim G}[\phi_c(x)]$ for any $c$ and any MHR distribution $D$ with the same mean $\mu$.
	
	We prove this by contradiction. Let $D$ be an MHR distribution with mean $\mu$ which satisfies $\E_{x \sim D}[\phi_c(x)] > \E_{x \sim G}[\phi_c(x)]$ for some $c$. The set of MHR distributions with the same tail after $c$ (the same $\Pr_{x \sim D}[x = x^* \mid x > c]$ for any $x^* > c$) is homeomorphic to a closed and bounded set in $\mathbb{R}^c$, which means it's compact. The function $\E_{x \sim D}[\phi_c(x)]$ is continuous in $D$ under $L^1$-norm, so there is a $D = D^*$ maximizing $\E_{x \sim D}[\phi_c(x)]$ among MHR distributions with the same tail after $c$. This $D^*$ differs from $G$ at some $x \leq c$. Define $q_i = \Pr_{x \sim D^*}[x \geq i + 1 \mid x \geq i]$ and $q = \Pr_{x \sim G}[x \geq i + 1 \mid x \geq i] = 1 - \mu^{-1}$. Because $D^*$ is MHR, $q_i$'s are decreasing. Also $q_1 > q$ as otherwise the mean cannot be $\mu$, and $q_c < q$ as otherwise $\E_{x \sim D^*}[\phi_c(t)] > \E_{x \sim G}[\phi_c(x)]$ cannot hold. Thus there is some $i^* < c$ such that $q_{i^*} > q$ and $q_{i^* + 1} \leq q$.
	
	We are going to show for a pair of small enough $\varepsilon$ and $\varepsilon'$, decreasing $q_{i^*}$ by $\varepsilon$ and increasing $q_{i^* + 1}$ by $\varepsilon'$ such that the mean is preserved will increase $\E_{t \sim D^*}[\phi_c(x)]$. Let $r = 1 + q_{i^* + 2} + q_{i^* + 2}q_{i^* + 3} + q_{i^* + 2}q_{i^* + 3}q_{i^* + 4} + \cdots$. When $\varepsilon \to 0$, we have $\varepsilon(1 + q_{i + 1}r) = \varepsilon' q_i r$. This implies $\varepsilon' q_i - \varepsilon q_{i + 1} > 0$, which means $\E_{x \sim D^*}[\phi_c(x)]$ is increased. It contradicts with the fact that $D^*$ maximizes $\E_{x \sim D^*}[\phi_c(x)]$.
\end{proof}

\begin{proof}{\em (of Theorem~\ref{thm:MHR-Single})}  From Theorem~\ref{thm:sing_upp}, we know $\frac{\textsc{Pro}}{\textsc{Alg}} \leq 1 / \E_{h \sim H}[\phi(h)]$ where $\phi(h) = 1 - (1 - \mu^{-1})^h$ is a concave function. From Lemma~\ref{lem:geo_dom} and Proposition~\ref{prop:convex}, among all MHR distributions $H$ with mean $\mu$, $\E_{h \sim H}[\phi(h)]$ is minimized by a geometric one. For geometric departure with mean $\mu$, $\E_{h \sim H}[\phi(h)] = 2 - \mu^{-1}$.
\end{proof}


\begin{theorem} 
\label{theo:geoTight} 
No online algorithm is better than $(2-\mu^{-1})$-competitive for $m=1$ items when the horizon distribution $H$ is geometric with mean $\mu$.  
\end{theorem}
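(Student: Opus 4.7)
The plan is to exploit the memoryless property of the geometric horizon to reduce to fixed-price policies, then construct a two-point value distribution that makes every such policy uniformly bad relative to the prophet, with the ratio approaching $2 - \mu^{-1}$ in a parametric limit.

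First I would argue that since $H$ is geometric, at every time step conditional on the item still being available the remaining horizon is again geometric with mean $\mu$, and the subproblem faced by the algorithm is identical to the original. Hence there is an optimal online policy that is Markovian and time-invariant: its decision depends only on the current buyer's valuation. For a value distribution supported on two points, such a policy is fully specified by its acceptance probabilities on each of the two support values.

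Next I would construct the hard family. Writing $\rho := 1 - \mu^{-1}$ and setting $P_q := q/(1 - (1-q)\rho)$ for $q \in (0,1)$, let $V_q$ put probability $q$ on some large value $M$ and probability $1 - q$ on the smaller value $v_q := M \rho P_q \in (0,M)$. The algebraic miracle to verify is that both natural fixed prices yield the same expected welfare, namely $M P_q$: pricing at $M$ gives $M \cdot P_q$ because $P_q$ equals the probability that at least one buyer with value $M$ arrives before the horizon ends, while pricing at $v_q$ or lower yields $\E_{X \sim V_q}[X] = qM + (1-q)v_q$, which collapses to $M P_q$ precisely because of the choice $v_q = M\rho P_q$. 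I would also briefly check that randomized policies do no better: writing the welfare of a generic policy with acceptance probabilities $(\pi_M, \pi_v)$ as $(q\pi_M M + (1-q)\pi_v v_q)/(1/\mu + (q\pi_M + (1-q)\pi_v)\rho)$, straightforward differentiation shows the maximum is attained at $\pi_M = 1$ and is independent of $\pi_v$, with common value $M P_q$. Combined with the memoryless reduction, this shows every online algorithm achieves at most $M P_q$ on this instance.

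Finally I would compute the prophet's welfare directly. Since the prophet knows the realized horizon and picks the maximum-valued buyer in it, and since the horizon is always at least one so some buyer is always present, $\textsc{Pro} = M P_q + v_q(1 - P_q)$; substituting $v_q = M\rho P_q$ gives $\textsc{Pro} = M P_q(1 + \rho(1 - P_q))$. Hence the competitive ratio on this instance equals $1 + \rho(1 - P_q)$, which tends to $1 + \rho = 2 - \mu^{-1}$ as $q \to 0$, proving the claim. The main conceptual obstacle is identifying the exact value $v_q = M \rho P_q$ that forces the algorithm to be indifferent between its two natural threshold choices; after that, the ratio computation is essentially a one-line algebraic simplification.
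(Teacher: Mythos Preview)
Your proposal is correct and follows essentially the same approach as the paper: exploit memorylessness to reduce to a fixed threshold, construct a two-point value distribution tuned so the algorithm is indifferent between its two thresholds, and take the small-probability limit to drive the ratio to $1+\rho=2-\mu^{-1}$. Your presentation is in fact slightly more explicit than the paper's---you pin down the indifference value $v_q=M\rho P_q$ in closed form and verify directly that randomized acceptance probabilities do not improve on the deterministic thresholds, whereas the paper simply asserts indifference and dismisses randomization in one sentence---but the construction and the limiting calculation are the same.
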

\begin{proof}
Let $q \in [0, 1)$ be the probability that the process continues after each step. We have $q = 1 - \mu^{-1}$.

Define $\textsc{Alg*}$ as the expected value of the optimal algorithm and $\textsc{Pro}$ as that of the prophet. Let the valuation distribution be: $v_\mathrm{L}$ with probability $1 - p$ and $v_\mathrm{H}$ with probability $p$, $v_\mathrm{L} < v_\mathrm{H}$. At each step, $\textsc{Alg*}$ will set the price to $v_\mathrm{H}$ if it expects to get more than $v_\mathrm{L}$ afterwards. Otherwise it will set the price to $v_\mathrm{L}$. Randomizing over $v_\mathrm{L}$ and $v_\mathrm{H}$ cannot help $\textsc{Alg*}$. Also, because the geometric distribution is memoryless, $\textsc{Alg*}$ will make the same decision every time, \emph{i.e.}, the optimal algorithm is single-threshold. We have
\begin{equation*}
\textsc{Alg*} = \max\left\{v_\mathrm{L} \cdot (1 - p) + v_\mathrm{H} \cdot p, \ v_\mathrm{H} \cdot \frac{p}{1 - q(1-p)}\right\}
\end{equation*}
and
\begin{align*}
\textsc{Pro} = v_\mathrm{H} \cdot \frac{p}{1 - q(1-p)} + v_\mathrm{L} \cdot \left(1 - \frac{p}{1 - q(1-p)}\right).
\end{align*}

When $\mu = 1$ and $q = 0$, the theorem holds because $2 - \mu^{-1} = 1$. Otherwise, we set $v_\mathrm{H}$ so that $\textsc{Alg*}$ is indifferent between its two options. In that case,
\begin{align*}
\lim_{p \to 0} \frac{\textsc{Pro}}{\textsc{Alg*}} &= \lim_{p \to 0} \left(1 + \frac{v_\mathrm{L}}{v_\mathrm{H}} \cdot \frac{1 - q(1 - p)}{p}\right)\\
&= \lim_{p \to 0} \left(1 + \left(\frac{p}{1 - q(1 - p)} - p\right) \cdot \frac{1 - q(1 - p)}{p}\right)\\
&= 1 + q = 2 - \mu^{-1}. \qedhere
\end{align*}
\end{proof}

\section{Lower Bounds for MHR Horizons (Proof of Theorem 1.5)}
\label{sec:multi}
\noindent Next we provide a proof of Theorem~\ref{thm:lb}.
For this, we first show a lower bound of $2$ for any dynamic pricing scheme in the limit when $m$ becomes large, and $1.57$ for any finite $m$. We will subsequently show that no fixed pricing scheme can extract constant fraction of the welfare for $m > 1$ items. For showing these results, we consider a special family of {\em i.i.d.} MHR horizon distributions, which we call {\em low-rate geometric}:  Let $H$ be a geometric distribution with mean $\mu$, so the probability of survival at each step is $q = 1 - \mu^{-1}$. We call $H$ low-rate geometric when $q \to 1^-$. Let $\lambda = 1-q$ be the rate of departure for each item. This goes to $0^+$ when $H$ is low-rate geometric. 

Low-rate geometric distributions correspond to the canonical setting where items are long-lasting, yet their departures are memoryless. In addition to being canonical, the reason we consider this setting is its analytic tractability: It allows us to ignore events where multiple items depart simultaneously, leading to tractable Markov chains for both the prophet and the algorithm. The proof of lower bound of $2$  for large $m$ involves analyzing an interesting time-reversed Markov chain for the prophet's welfare. 

\subsection{Tractable Approximation}
\label{sec:tractable}
\noindent Denote by $\textsc{Alg}^*_m(\lambda)$ the optimal online policy when there are $m$ items and the rate of departures is $\lambda$. Similarly, we define $\textsc{Pro}_m(\lambda)$ to denote the prophet. Since we are considering the limit as $\lambda \rightarrow 0^+$, we will assume throughout that $\lambda < \frac{1}{m}$.

 Define the {\em state} of the system to be $k$ if there are $k$ items in the system. Note that since departures are geometric, any online policy will use a fixed price in each state. The state of the system therefore {\em decreases} over time.  For both of the processes (corresponding to prophet and the optimal algorithm) given the current state is $k$, there is a positive probability that the next state will be $k'$ for any $k' \leq k$.  However, the probability that multiple items depart together (or a match and departures happen together for the algorithm) is extremely small when $\lambda \rightarrow 0^+$. In light of this, we introduce alternative processes for the ease of analysis. 

In an alternative process, we will assume two events (departures, matches) do not simultaneously happen. In other words, for the prophet,  given state $k$, the state transitions to $k-1$ with probability $k \lambda$ per time step. We do not consider state changes due to matching. Instead and equivalently, we will assume that in hindsight, the prophet can optimally match arriving buyers to items that had not departed by that time. Call this prophet $\textsc{Pro}'_m(\lambda)$. For the algorithm, we assume that if the state is $k$, the price is set so that the rate at which a buyer is matched is $\pi_k = \beta_k \lambda k$. Since items also depart at rate $\lambda k$, we will assume the state transitions from $k$ to $k-1$ at rate $(1+\beta_k)\lambda k$. Denote the optimal such algorithm as $\textsc{Alg}'_m(\lambda)$.

\begin{lemma}  (Proved in Appendix~\ref{app:equiv})
\label{lem:equiv}
For any $m \ge 1$: $\frac{\textsc{Pro}_m(\lambda)}{\textsc{Pro}'_m(\lambda)} \to 1$ and $\frac{\textsc{Alg}^*_m(\lambda)}{\textsc{Alg}'_m(\lambda)} \to 1$ as $\lambda \to 0$  
\end{lemma}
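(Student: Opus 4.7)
The plan is to prove each limit via a coupling under which the two processes agree on a sample set of probability $1 - O(m^2 \lambda)$, and then to bound the welfare contribution from the exceptional event. For the prophet pair, I couple $\textsc{Pro}_m(\lambda)$ and $\textsc{Pro}'_m(\lambda)$ on a common probability space using the same buyer valuation stream and the same per-item departure Bernoulli coins: at each time step, each of the $k$ surviving items flips an independent coin with success probability $\lambda$; in $\textsc{Pro}_m$ every success removes its item, while in $\textsc{Pro}'_m$ any positive number of successes in a step removes exactly one item. Let $\mathcal{G}$ be the event that no time step ever records two or more simultaneous successes. At state $k$, a per-step collision has probability $\binom{k}{2} \lambda^2 + O(\lambda^3)$ and the expected sojourn in state $k$ is $\Theta(1/(k\lambda))$, so a union bound across $k = 1, \ldots, m$ gives $\Pr[\mathcal{G}^c] = O(m^2 \lambda)$.

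On $\mathcal{G}$ the two processes have identical item-availability trajectories, and so their hindsight-optimal matchings yield identical welfare pointwise. On $\mathcal{G}^c$, each welfare is at most $m V_{\max}$, where $V_{\max}$ is the maximum buyer valuation observed during the (longer of the) horizons; because $\mathcal{G}$ depends only on the departure coins and is independent of the valuations,
\[
|\textsc{Pro}_m(\lambda) - \textsc{Pro}'_m(\lambda)| \leq 2m \cdot \E[V_{\max}] \cdot \Pr[\mathcal{G}^c] = O(m^3 \lambda \cdot \E[V_{\max}]).
\]
The matching lower bound $\textsc{Pro}'_m(\lambda) = \Omega(m \cdot \E[V_{\max}])$ holds because the prophet can select the top $m$ buyer valuations among the $\Theta(1/\lambda)$ arrivals, and for fixed $m$ these top $m$ order statistics are of the same order as $V_{\max}$ in expectation as $\lambda \to 0$. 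Combining yields $\textsc{Pro}_m(\lambda)/\textsc{Pro}'_m(\lambda) \to 1$.

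The algorithm limit uses the same coupling: any state-dependent pricing policy can be simulated on both processes with the shared randomness, and the two simulations differ only on the $O(m^2 \lambda)$-probability event above (augmented by the event of a simultaneous sale-and-departure in some step, which contributes a further $O(m \lambda)$ by the same counting). Hence any policy's expected welfares on the two processes differ by $O(m^3 \lambda \cdot \E[V_{\max}])$, and taking suprema transfers this bound to $|\textsc{Alg}^*_m(\lambda) - \textsc{Alg}'_m(\lambda)|$. Dividing by $\textsc{Alg}'_m(\lambda) = \Omega(m \cdot \E[V_{\max}])$---which follows from applying the single-item balancing scheme of Section~\ref{sec:mhrSingle} per item---closes the argument. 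The main obstacle is confirming this denominator scaling so that the additive $O(m^3 \lambda \cdot \E[V_{\max}])$ error is asymptotically negligible; this requires only the mild observation that with constant probability at least one item survives until a near-top-value buyer arrives.
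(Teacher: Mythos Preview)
Your high-level strategy---couple the two processes with shared coins and valuations, and show they coincide off an $O(m^2\lambda)$-probability event---is the same as the paper's. The execution, however, diverges from the paper's in a way that introduces real gaps.

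First, your coupling does not realize $\textsc{Pro}'_m$ as defined: removing one item whenever \emph{at least one} coin fires gives transition probability $1-(1-\lambda)^k$ from state $k$, whereas $\textsc{Pro}'_m$ is defined with transition probability exactly $k\lambda$. The paper handles this mismatch explicitly via a second (``type~2'') bad event in which the alternative process transitions even though no real departure occurred; you have silently dropped it.

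Second, the factorization $\E[V_{\max}\mathbf{1}_{\mathcal{G}^c}]=\E[V_{\max}]\Pr[\mathcal{G}^c]$ is not justified. You correctly note that $\mathcal{G}$ is a function of the departure coins and hence independent of the valuation \emph{sequence}, but your $V_{\max}$ is the maximum over a \emph{random} horizon that itself depends on those coins, so $V_{\max}$ and $\mathcal{G}$ are correlated. A deterministic cutoff $T=\Theta((\log m)/\lambda)$ would repair this, but it is not in your argument.

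Third, and more structurally, your additive-error route forces you to lower-bound $\textsc{Pro}'_m$ and $\textsc{Alg}'_m$ by $\Omega(m\,\E[V_{\max}])$. The justification ``apply the single-item balancing scheme per item'' does not work as stated, since the $m$ items compete for the same buyer stream; and the claimed order-statistic scaling is a distribution-dependent assertion you have not verified. The paper avoids all of this with a \emph{multiplicative} bound: because on the complement of each bad event one process pointwise dominates the other, and because conditioning on the bad event does not raise the dominated process's expected welfare, one obtains $(1-p_1)\le \textsc{Pro}/\textsc{Pro}'\le 1/(1-p_2)$ directly, with no need to control absolute magnitudes or invoke any property of $V$.
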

Therefore, we will analyze the quantity  $c_m(\lambda) =\frac{\textsc{Pro}'_m(\lambda)}{\textsc{Alg}'_m(\lambda)}$ as the competitive ratio of the algorithm against prophet for any $m, \lambda$ and subsequently take the limit as $\lambda \rightarrow 0^+$. In the remainder of this section, without creating ambiguity we omit the $m$ and $\lambda$ in notation and use $\textsc{Alg}'$ and $\textsc{Pro}'$ instead. 

\subsection{Lower Bound Construction for Dynamic Pricing}
\label{sec:lbgeom}
\noindent To show the lower bounds, we consider the valuation distribution $V$ such that for any $x \in [1,\infty)$, $\Pr_{v \sim V}[v \geq x] = x^{-\alpha}$ where $\alpha \in (1, +\infty)$ is a constant that will be determined later. Note that $\E_{v \sim V}[v]$ is finite. We first give an upper bound for $\textsc{Alg}'$ for this valuation distribution:
\begin{align*}
\textsc{Alg}' \leq \sum_{k = 1}^m \max_{\beta_k} \frac{\beta_k k \lambda}{(1 + \beta_k) k \lambda} \cdot \E[v \mid v \geq F_V^{-1}(1 - \beta_k k \lambda)]
\end{align*}
where $F_V$ is the cumulative distribution function for $V$. The probability of accepting a buyer in state $k$ is at most $\frac{\beta_k k \lambda}{(1 + \beta_k) k \lambda}$ (because acceptance and departure are disjoint events in the alternative process). Simplifying it, we have:
\begin{align*}
\textsc{Alg}'  &\leq \sum_{k = 1}^m \max_{\beta_k} \frac{\beta_k k \lambda}{(1 + \beta_k) k \lambda} \cdot \E_{v \sim V}[v \mid v \geq (\beta_k k \lambda)^{-\frac{1}{\alpha}}]\\
&=  \sum_{k = 1}^m \max_{\beta_k} \frac{\beta_k k \lambda}{(1 + \beta_k) k \lambda} \cdot \frac{\alpha}{\alpha - 1} \cdot (\beta_k k \lambda)^{-\frac{1}{\alpha}}\\
&= \sum_{k = 1}^m (k\lambda)^{-\frac{1}{\alpha}} \cdot \frac{\alpha}{\alpha - 1} \cdot \max_{\beta_k} \frac{\beta_k^{1 - \frac{1}{\alpha}}}{1 + \beta_k}.
\end{align*}
Optimizing over $\beta_k$, we have:
\begin{equation}
\label{eq:alg}
\textsc{Alg}'  = \sum_{k = 1}^m (k\lambda)^{-\frac{1}{\alpha}} \cdot (\alpha - 1)^{-\frac{1}{\alpha}}.
\end{equation}

Now we solve for $\textsc{Pro}'$. Note that for the prophet, we assume the state only changes due to departure of items. Let $p_k(v)$ denote the probability that the item departing in state $k$ is matched to a buyer with valuation at least $v$ by the prophet. In the rest of this section, we call the item departing at state $k$ to be item $k$. We have:

\begin{equation}
\label{eq:lowRatePro}
\textsc{Pro}' = \sum_{k = 1}^m \int_{0}^{+\infty} p_k(v) \mathrm{d} v.
\end{equation}

We now present different bounds for the above quantity depending on whether $m$ is finite, or we are considering the limit $m \rightarrow \infty$.

\subsubsection{Lower Bound for Dynamic Pricing: Finite $m$}
\label{sec:geoAnym}
 This bound is simpler. Clearly, if a buyer with value at least $v$ arrives at state $k$, $\textsc{Pro}'$ always can assign the item $k$ to a buyer with value at least $v$. Therefore,
\begin{align*}
\textsc{Pro}' &\geq \sum_{k = 1}^m   \int_{0}^{+\infty}  \Pr[\text{some buyer with valuation at least } v \text{ arrives in state } k] \mathrm{d} v\\
&= \sum_{k = 1}^m \left(1 + \int_{1}^{+\infty} \frac{(1 - k\lambda)v^{-\alpha}}{v^{-\alpha} + k\lambda -k \lambda v^{-\alpha}} \mathrm{d} v\right).
\end{align*}
Therefore, we have:
\begin{align*}
\liminf_{\lambda \to 0^+} \frac{\textsc{Pro}'}{\textsc{Alg}'} &\geq \lim_{\lambda \to 0^+}  \frac{ \int_1^{+\infty} \frac{1}{1 + k\lambda v^{\alpha}} \mathrm{d} v}{(k\lambda)^{-\frac{1}{\alpha}}(\alpha - 1)^{-1/\alpha}}= \frac{\int_0^{+\infty} \frac{1}{1 + u^\alpha} \mathrm{d} u}{(\alpha - 1)^{-1/\alpha}}  \\
&= \frac{\frac{1}{\alpha}\cdot B(\frac{1}{\alpha}, 1-\frac{1}{\alpha})}{(\alpha - 1)^{-1/\alpha}} = \frac{\frac{\pi}{\alpha} / \sin(\frac{\pi}{\alpha})}{(\alpha - 1)^{-1/\alpha}},
\end{align*}
where $B(\cdot,\cdot)$ is the beta function. Comparing to Equation~(\ref{eq:alg}), we have that $\liminf_{\lambda \to 0^+} \frac{\textsc{Pro}'}{\textsc{Alg}'}$ is maximized at $\alpha = 2$ and in that case, 
$$\frac{\textsc{Pro}'}{\textsc{Alg}'} \ge \pi / 2 \approx 1.5708.$$ 
The bound holds for any $m \in \mathbb{N}^+$.

\subsubsection{Lower Bound for Dynamic Pricing: Large $m$}
\label{sec:geoLargem}
We now consider the more interesting case when $m \to \infty$. We present a tighter lower bound for Equation~(\ref{eq:lowRatePro}). To achieve this goal, we need to to analyze $p_k(v)$  more carefully. Previously, we used the fact that if a buyer with value at least $v$ arrives during state $k$, then a buyer with value at least $v$ will be assigned to the item $k$ by the prophet. However, the prophet might assign a buyer with value at least $v$ to item $k$ even if no such buyer arrives in state $k$. 

It is easy to see that the optimal policy for the prophet is the following: It considers the items in increasing order of realized horizon, and matches each item to the highest valued unmatched buyer arriving no later than the horizon of the item. A buyer with value at least $v$ is matched to the item $k$ if and only if there is an $i \geq 0$ such that between beginning of the state $k + i$ and end of state $k$, at least $i + 1$ buyers with value at least $v$ arrive. Note that in the previous section, we only considered the case of $i = 0$ to give a lower bound for $p_k(v)$. 

\paragraph{Time-reversed Markov Chain.} In order to analyze the new process, we start from the end of state $k$ and go back in time.  There are two possible types of events:
\begin{itemize}
\item An item departs, so that the state increases by $1$ (note we are going back in time); or 
\item A buyer with valuation at least $v$ arrives.
\end{itemize}

We maintain a counter $q$ initially set to $1$. Each time an item departs, we increase $q$ by $1$, and each time a buyer with valuation at least $v$ arrives, we decrease $q$ by $1$. It is easy to see that the item $k$ is matched to a buyer with valuation at least $v$ by the prophet if and only if $q$ reaches $0$, {\em i.e.}, $p_k(v) = \Pr[q = 0 \mbox{ at some time}]$. 

Note that as we are going back in time, when the state is $k+j-1$, the probability an item departs is $(k + j)\lambda$. Similarly, the probability a buyer with valuation at least $v$ arrives is $v^{-\alpha}$. This yields a Markov chain in which when the state is the $(k+j,q)$,  the former event causes the state to become $(k+j+1,q+1)$ and the latter causes the state to become $(k+j,q-1)$. 

As $p_k(v)$'s themselves are hard to analyze, we approximate them by a sequence of functions $\{f_j(x)\}_{j = 0}^\infty$. Each $f_j(x)$ is defined on $[0, 1]$, and it represents the probability that the following random walk ever reaches $0$ in $j$ steps: A point starts at $1$ on the number line. Independently in each step, it goes left by $1$ with probability $x$, and goes right by $1$ otherwise. Note that as $j \rightarrow \infty$, $f_j(x) \rightarrow \min\left(1,\frac{x}{1-x}\right)$. That is, the point-wise limit of $\{f_j(x)\}_{j = 1}^\infty$ as $j \to \infty$ is $f(x) = \min(1, x/(1-x))$. (We slightly abuse notation at $x = 1$ and $f(1) = 1$.)

\begin{lemma}
$p_k(v) \geq f_{j}(v^{-\alpha}/(v^{-\alpha} + (k + j)\lambda))$ for any integer $j \in [1, m - k]$.
\label{lem:dominate}
\end{lemma}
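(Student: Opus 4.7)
The plan is to prove the inequality by coupling the embedded discrete-event chain driving $p_k(v)$ with the random walk whose absorption probability defines $f_j$. First I would make the embedded chain explicit: conditioned on some relevant event occurring at a backward time step (and, under the alternative process, ignoring the lower-order possibility of two events coinciding), in state $(k+\ell, q)$ the chain transitions to $(k+\ell, q-1)$ with probability $v^{-\alpha}/(v^{-\alpha}+(k+\ell+1)\lambda)$ --- corresponding to a buyer of value at least $v$ arriving --- and to $(k+\ell+1, q+1)$ otherwise. Since $(k+\ell+1)\lambda$ is increasing in $\ell$, this ``downward'' probability is a decreasing function of $\ell$.

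Next, I would observe that starting from $\ell = 0$, the state-index grows by at most one per event, so during the first $j$ events it never exceeds $j-1$; the hypothesis $j \le m-k$ keeps the chain within its feasible range $\ell \le m-k$. Consequently, at each of the first $j$ events the downward probability is at least $x := v^{-\alpha}/(v^{-\alpha}+(k+j)\lambda)$. I would then couple the two processes via i.i.d.\ uniforms $U_1,\dots,U_j \in [0,1]$: the $f_j$-walk moves left at step $i$ iff $U_i \le x$, and the backward chain moves down at step $i$ iff $U_i$ lies below its current downward probability $\alpha_i \ge x$. A straightforward induction shows that the chain's $q$-coordinate is always bounded above by the walker's position, so whenever the walker reaches $0$ within $j$ steps, so does the chain. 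Hence
\begin{equation*}
p_k(v) \;\ge\; \Pr[\,\text{chain hits } q = 0 \text{ in its first } j \text{ events}\,] \;\ge\; f_j(x),
\end{equation*}
which is exactly the claimed bound.

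The main obstacle is establishing the uniform lower bound $\alpha_i \ge x$ for \emph{every} one of the first $j$ events, not merely the initial step. This is precisely where monotonicity of the downward probability in $\ell$ combines with the deterministic bound $\ell \le j-1$ (guaranteed by $j \le m-k$) to make the coupling go through; once this is in hand, the stochastic-dominance conclusion is standard. A minor technical point is that the analysis is carried out on the \emph{embedded} jump chain rather than the original discrete-time process --- justified because, as $\lambda \to 0^+$, the alternative-process approximation already neglects the negligible-probability event of simultaneous departures and arrivals.
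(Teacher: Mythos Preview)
Your proposal is correct and follows essentially the same approach as the paper's own proof: both argue that during the relevant initial window the backward chain's probability of decreasing $q$ is uniformly at least $x = v^{-\alpha}/(v^{-\alpha}+(k+j)\lambda)$, and then couple with the $j$-step simple random walk defining $f_j$. Your version is simply more explicit---you spell out the uniform-random-variable coupling and the induction showing the chain's $q$-coordinate is dominated by the walker's position, whereas the paper compresses this into the single observation that from state $k$ to state $k+j$ there are at least $j$ moves, each with decrease probability $\ge x$---but the underlying argument is identical.
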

\begin{proof}
From state $k$ to state $k + j$, exactly $j$ departures happen so the process for $p_k(v)$ has at least $j$ moves in this period. For each move, the probability that the counter $q$ decreases is at least $v^{-\alpha}/(v^{-\alpha} + (k + j)\lambda)$. Therefore, we can couple these two processes so that if the random walk ever reaches $0$, the counter must have visited $0$ too.
\end{proof}

We now show that these functions uniformly converge.
\begin{lemma} 
\label{lem:uniform_conv}
$\{\log f_j(x)\}_{j = 1}^\infty$ uniformly converges to $\log f(x)$ on $(0, 1]$. This implies $\forall \varepsilon > 0, \exists k, \forall j > k, \forall x, f_j(x) > (1 - \varepsilon) f(x)$.
\end{lemma}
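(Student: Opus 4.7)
The plan is to reduce the logarithmic statement to a uniform upper bound on $\epsilon_j(x) := 1 - f_j(x)/f(x)$. Since $\log(1-t)$ is continuous at $t = 0$, uniform convergence $\epsilon_j \to 0$ on $(0,1]$ immediately gives $\log f_j \to \log f$ uniformly, and the rider is a direct restatement: once $\epsilon_j(x) < \varepsilon$ holds for all $x$ and all large $j$, we get $f_j(x) > (1-\varepsilon)f(x)$. Throughout I use two easy monotonicity properties: $f_j(x)$ is non-decreasing in $j$, and non-decreasing in $x$ (couple two walks to common uniform $[0,1]$ drivers, so that a left step at rate $x$ forces a left step at any rate $x' \geq x$).

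I would split $(0,1]$ at the critical point $x = 1/2$. On $[1/2, 1]$, the limit is $f(x) = 1$, so monotonicity in $x$ gives $\epsilon_j(x) = 1 - f_j(x) \leq 1 - f_j(1/2)$, and it suffices to bound the symmetric-walk tail. Here the first-passage time $T$ from $1$ to $0$ obeys the Catalan/ballot law $\Pr[T = 2m+1] = \frac{1}{m+1}\binom{2m}{m}(1/2)^{2m+1}$, and Stirling's estimate $\binom{2m}{m} \sim 4^m/\sqrt{\pi m}$ yields $\Pr[T > j] = O(1/\sqrt{j})$. Hence $\sup_{x \in [1/2,1]} \epsilon_j(x) = O(1/\sqrt{j})$.

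On $(0, 1/2]$, where $f(x) = x/(1-x)$, I would use the Catalan tail expansion
\[
f(x) - f_j(x) = \sum_{m \geq \lceil j/2 \rceil} \frac{1}{m+1} \binom{2m}{m} x^{m+1}(1-x)^m,
\]
divide by $f(x)$, and set $\rho := 4x(1-x) \in (0,1]$ to obtain
\[
\epsilon_j(x) = (1-x) \sum_{m \geq \lceil j/2 \rceil} \frac{1}{m+1}\binom{2m}{m}\left(\tfrac{\rho}{4}\right)^m \leq C \sum_{m \geq \lceil j/2 \rceil} \frac{\rho^m}{(m+1)\sqrt{m}} = O\!\left(\tfrac{1}{\sqrt{j}}\right),
\]
using $\binom{2m}{m}4^{-m} = O(1/\sqrt{m})$ and $\rho \leq 1$, uniformly in $x$. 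Combining both regimes gives $\sup_{x \in (0,1]} \epsilon_j(x) = O(1/\sqrt{j})$, which is the desired uniform statement.

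The main subtlety is at the critical point $x = 1/2$, where the first-passage tail is polynomial (not geometric) in $j$: a naive term-by-term geometric bound $\rho^m$ used for $x < 1/2$ would blow up as $x \to 1/2^-$. The Stirling factor $\binom{2m}{m}4^{-m} = O(1/\sqrt{m})$ is precisely what rescues uniformity across the critical point. Once this uniform $O(1/\sqrt{j})$ estimate is in hand, the remaining arguments are routine bookkeeping.
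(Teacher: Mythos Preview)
Your argument is correct and in fact proves more than the lemma asks: you obtain an explicit rate $\sup_{x\in(0,1]}\bigl(1-f_j(x)/f(x)\bigr)=O(1/\sqrt{j})$, whereas the paper only needs qualitative uniform convergence. The Catalan/Stirling computation is accurate, the monotonicity-in-$x$ coupling is the right tool on $[1/2,1]$, and you have correctly identified that the critical point $x=1/2$ (where the tail is only polynomial) is the crux.

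The paper's proof takes a much softer route. On any compact interval $[c,1]$ it simply invokes Dini's theorem: $\log f_j$ is continuous, increases in $j$ to the continuous limit $\log f$, hence the convergence is uniform on $[c,1]$. Near $x=0$ the paper uses only the one-step bound $f_j(x)\ge f_1(x)=x$, and observes that for $x<\varepsilon$ one already has $x>(1-\varepsilon)\cdot x/(1-x)=(1-\varepsilon)f(x)$. Combining the two regions (with the cut at $c=\varepsilon$) gives the rider directly, without any explicit first-passage calculation.

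So the tradeoff is: the paper's argument is shorter and avoids all asymptotics, at the cost of giving no rate; your argument does real work with the hitting-time distribution but yields a quantitative $O(1/\sqrt{j})$ bound that could in principle feed into an explicit (non-asymptotic) lower bound for $\textsc{Pro}'$ at finite $m$, rather than only in the $m\to\infty$ limit. For the purposes of the present lemma, the Dini argument is all that is needed.
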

\begin{proof}
Notice $f_j(x)$ is continuous on $x$ and increasing in $j$. For any $c > 0$, on the compact set $[c, 1]$, each $\log f_j(x)$ is continuous in $x$, and their limit $\log f(x)$ is continuous too. Further, $\log f_j(x)$ is increasing in $j$. By Dini's theorem, the convergence on $[c, 1]$ is uniform.

For any $\varepsilon > 0$, for any $x \in (0, \varepsilon)$ and any $j \geq 1$, $f_j(x) \geq x > \frac{x}{1 - x} \cdot (1 - \varepsilon) = (1 - \varepsilon)f(x)$. Because $\{\log f_j(x)\}_{j = 1}^\infty$ uniformly converges on $[\varepsilon, 1]$, there is a $k$ so that for any $j > k$ and any $x \in [\varepsilon, 1]$, $f_j(x) > (1 - \varepsilon)f(x)$. This completes the proof.
\end{proof}

\medskip
Now we are ready to explicitly compute a lower bound for $\textsc{Pro}'$ as $m \to \infty$. We start with Equation~(\ref{eq:lowRatePro}).
\begin{align*}
\textsc{Pro}' &=  \sum_{k = 1}^m \int_{0}^{+\infty} p_k(v) \mathrm{d} v\\
&\geq \sum_{k = 1}^{m - \sqrt{m}} \int_{0}^{+\infty} p_k(v) \mathrm{d} v\\
&\geq \sum_{k = 1}^{m - \sqrt{m}} \int_{0}^{+\infty} f_{\sqrt{m}}(v^{-\alpha}/(v^{-\alpha} + (k + \sqrt{m})\lambda)) \mathrm{d} v
\end{align*}
where the final inequality follows from Lemma~\ref{lem:dominate}.

Let $c_k = \inf_{x \in (0, 1]} f_k(x) / f(x)$. Then we have:
\begin{align*}
\textsc{Pro}' &\geq c_{\sqrt{m}}\sum_{k = 1}^{m - \sqrt{m}} \int_{0}^{+\infty} f(v^{-\alpha}/(v^{-\alpha} + (k + \sqrt{m})\lambda)) \mathrm{d} v\\
&= c_{\sqrt{m}}\sum_{k = 1}^{m - \sqrt{m}} \int_{0}^{+\infty} \min(1, v^{-\alpha}/((k + \sqrt{m}) \lambda)) \mathrm{d} v\\
&= c_{\sqrt{m}}\sum_{k = \sqrt{m}}^{m} \int_{0}^{+\infty} \min(1, v^{-\alpha}/(k \lambda)) \mathrm{d} v\\
&= c_{\sqrt{m}}\sum_{k = \sqrt{m}}^{m} \left((k\lambda)^{-\frac{1}{\alpha}} + \int_{(k\lambda)^{-\frac{1}{\alpha}}}^{+\infty} v^{-\alpha}/(k \lambda) \mathrm{d} v\right)\\
&= c_{\sqrt{m}}\sum_{k = \sqrt{m}}^{m} \frac{\alpha}{\alpha - 1} \cdot (k\lambda)^{-\frac{1}{\alpha}}.
\end{align*}
When $m \to \infty$, $c_{\sqrt{m}}$ goes to $1$ by Lemma~\ref{lem:uniform_conv}, and $\frac{\sum_{k = \sqrt{m}}^{m} k^{-\frac{1}{\alpha}}}{\sum_{k =1}^{m} k^{-\frac{1}{\alpha}}}$ goes to $1$ too. Thus,
\begin{align*}
\liminf_{m \to \infty} \left ( \liminf_{\lambda \to 0^+} \frac{\textsc{Pro}'}{\sum_{k = 1}^{m} \frac{\alpha}{\alpha - 1} \cdot (k\lambda)^{-\frac{1}{\alpha}}} \right ) &\geq 1.
\end{align*}
Together with the bound for $\textsc{Alg}'$ from Equation~(\ref{eq:alg}), this gives us:
\begin{align*}
\liminf_{m \to \infty} \left (\liminf_{\lambda \to 0^+} \frac{\textsc{Pro}'}{\textsc{Alg}'} \right ) \geq \frac{\frac{\alpha}{\alpha - 1}}{(\alpha - 1)^{-\frac{1}{\alpha}}},
\end{align*}
which reaches its maximum of $2$ at $\alpha = 2$. This completes the proof of Theorem~\ref{thm:lb}.

\subsection{Lower Bound for Fixed Pricing Schemes}
\label{sec:singleLB}
\noindent A natural question is whether there is a single-threshold algorithm that is a constant approximation. Note that this is indeed the case when the horizons $H_i$'s are identical and deterministic; in fact, in this case, the competitive ratio approaches $1$ as $m \rightarrow \infty$. In contrast, when the horizons are not deterministic --- even if they are \emph{i.i.d} geometric, we show that no fixed pricing scheme can be constant-competitive.  This shows the second part of Theorem~\ref{thm:lb}.
\begin{theorem} 
\label{thm:splb}
There exists a family of instances with \emph{i.i.d} geometric horizons, such that any fixed pricing algorithm is $\Omega(\log \log m)$-competitive, where $m$ is the number of items.
\end{theorem}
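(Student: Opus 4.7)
My plan is to construct an explicit family of instances with i.i.d.\ geometric horizons on which the ratio between the prophet's welfare (or even any dynamic-pricing algorithm's welfare) and the welfare of the best fixed-price scheme grows like $\log \log m$. The instance has $m$ items whose horizons are i.i.d.\ geometric with mean $m$, and a carefully chosen heavy-tailed valuation distribution $V$ whose support consists of $\Theta(\log m)$ exponentially-spaced ``scales''. Concretely, letting $q_m := \Pr_{h \sim H}[h > m] \in [1/4, 1/e]$, I would set $\mathrm{supp}(V) = \{1/(q_m^{t}\,t^{2}) : t = 3, 4, \ldots, \log_{2} m\}$ with $\Pr_{v\sim V}[v \geq 1/(q_m^{t} t^{2})] = q_m^{t}$. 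The intuition is that scale $t$ is the ``right'' price to use at time roughly $tm$, when on average $m q_m^{t}$ items remain, so a dynamic scheme can balance matches against departures in each epoch $[tm, (t+1)m]$, while a single price is locked into just one scale.

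First I would upper-bound the fixed-pricing algorithm. Any fixed threshold is (without loss of generality) of the form $1/(q_m^{t^\star} (t^\star)^{2})$ for some $t^\star$. Split time into epochs $[jm+1, (j+1)m]$ for $j = 0, 1, 2, \ldots$ At most one sale happens per item, and in epoch $j$ the number of sales is bounded both by the expected number of high-value buyers ($m q_m^{t^\star}$) and by the expected number of surviving items ($\approx m q_m^{j}$). Multiplying by the conditional expected value $\E_{v \sim V}[v \mid v \geq 1/(q_m^{t^\star} (t^\star)^{2})]$, which a short computation using the support structure shows is $\Theta\!\bigl(q_m^{-t^\star}(1/t^\star - 1/\log_2 m)\bigr)$, and summing the geometric series $\sum_{j} \min(q_m^{t^\star}, q_m^{j})$, everything collapses to $O(m)$ independent of $t^\star$.

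Next I would lower-bound the prophet (or exhibit an explicit adaptive policy) by $\Omega(m \log \log m)$. The idea is the balancing heuristic: in the $j$-th epoch $[jm+1,(j+1)m]$ use reserve price $1/(q_m^{j} j^{2})$, which is tuned so that the probability a given buyer is accepted matches the probability a given surviving item has departed by the end of the epoch, up to constants. Then the expected welfare per epoch is $\Omega(1) \cdot m \cdot q_m^{j} \cdot \E_{v\sim V}[v \mid v \geq 1/(q_m^j j^2)] = \Omega(m/j)$, and summing $j$ from $3$ up to $(\log_{2} m) - O(1)$ gives a harmonic sum of $\Omega(m \log \log m)$. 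Dividing the two bounds gives the claimed ratio.

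The main obstacle is not any single calculation but rather arranging the support of $V$ so both bounds go through simultaneously: the extra factor of $1/t^{2}$ in each support atom is what makes $\E_{v}[v \mid v \geq 1/(q_m^{t^\star} (t^\star)^{2})]$ behave like a telescoping tail $\Theta(q_m^{-t^\star}/t^\star)$ rather than like $\Theta(q_m^{-t^\star})$, which is exactly what is needed so that the fixed price algorithm is only $O(m)$ (rather than $\Theta(m\log\log m)$, which would wipe out the gap), while leaving enough mass at each scale that the adaptive scheme still collects $\Omega(1/j)$ per epoch. I would double-check the boundary epochs ($j$ close to $\log_{2} m$), where the geometric horizon provides only a constant expected number of surviving items, and truncate the sum for the prophet a logarithmic distance away from $\log_{2} m$ to avoid this region contributing degenerately; this loses only a constant factor.
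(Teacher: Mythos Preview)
Your proposal is essentially the paper's own proof: the same instance (i.i.d.\ geometric horizons with mean $m$, valuation support $\{1/(q_m^t t^2)\}$ with tail probabilities $q_m^t$), the same $O(m)$ upper bound for the fixed-price algorithm via the epoch decomposition $[jm+1,(j+1)m]$ and the bound $\sum_j \min(mq_m^{t^\star}, mq_m^j)$, and the same $\Omega(m\log\log m)$ lower bound on the prophet via the per-epoch balancing price $1/(q_m^j j^2)$ summed over $j$ up to $\log_2 m - O(1)$.

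The only noteworthy difference is in how the prophet lower bound is certified. The paper does not analyze an adaptive online policy directly; instead it invokes its main algorithmic result (the constant-factor equivalence between $\textsc{Pro}$ and the per-stage relaxation from Lemma~2), redefines stages so the decay factor is $q_m$ instead of $1/2$, and reads off the bound from the relaxation. Your plan of directly arguing the prophet collects $\Omega(mq_m^j)\cdot\E[v\mid v\ge 1/(q_m^j j^2)]$ in epoch $j$ is cleaner and self-contained, but be careful if you phrase it as an \emph{online} adaptive policy: sales in earlier epochs deplete the item pool, so you would need a concentration argument to ensure $\Omega(mq_m^j)$ items remain available in epoch $j$. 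The easier route is to bound the prophet directly by assigning, for each $j$, the items whose horizons land in epoch $j$ (expected count $\Theta(mq_m^j)$) to the high-value buyers arriving in that epoch; these assignments are disjoint across $j$ and avoid the coupling issue entirely.
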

\begin{proof}
For any $m \geq 2^5$ such that $\log_2 m$ is an integer, consider a geometric horizon distribution $H$ whose mean is $m$: Let $q_m$ be the probability that the horizon is greater than the mean, \emph{i.e.} $q_m = \Pr_{h \sim H}[h > m]$. It is easy to verify $\frac{1}{4} \leq q_m \leq \frac{1}{e}$ since $H$ is geometric. Let the value distribution $V$ satisfy: $\mathrm{supp}(V) = \{1/(q_m^t t^2) \mid t = 3, 4, \ldots, \log_2 m\}$ and $\Pr_{v \sim V}[v \geq 1/(q_m^t t^2)] = q_m^t$ for $t = 3, 4, \ldots, \log_2 m$. Straightforward calculation shows
\begin{align*}
\E_{v \sim V}[v \mid v \geq 1/(q_m^t t^2)] &=  \Theta(1) \cdot \frac{1}{q_m^t} \cdot \sum_{k = t}^{\log_2 m} q_m^k \cdot \frac{1}{q_m^k k^2} \\
& = \Theta(1) \cdot \frac{1}{q_m^t} \cdot \left(\frac{1}{t} - \frac{1}{(\log_2 m) + 1}\right).
\end{align*}
Without loss of generality, for any single-threshold algorithm $\textsc{Sing}$, assume the threshold is $1/(q_m^t t^2)$. We know in time interval $[jm + 1, (j+1)m]$, the expected number of transactions is at most the minimum of the expected number of buyers with valuations at least $1/(q_m^t t^2)$, and the expected number of items alive at the start of the interval. Therefore,
\begin{align*}
\textsc{Sing} &\leq \E_{v \sim V}[v \mid v \geq 1/(q_m^t t^2)] \cdot \sum_{j = 0}^\infty \min(mq_m^t, mq_m^j)\\
&\leq m \cdot \E_{v \sim V}[v \mid v \geq 1/(q_m^t t^2)] \cdot (t + 1) q_m^t/(1 - q_m) = O(m).
\end{align*}

We know from previous discussion that the upper bound from Lemma~\ref{lem:ProEarly} is at most $53 \cdot \textsc{Pro}$. Previously, we set the stages so that about $\frac{1}{2}$ of items depart in each stage. The factor of $\frac{1}{2}$ is not essential and we can change it to any constant strictly between $0$ and $1$, \emph{e.g.} $q_m$. Doing this only costs us a constant.

If we set the reserve price in the interval $[jm + 1, (j + 1)m]$ to be $q_m^j j^2$, we have:
\begin{align*}
\textsc{Pro} &= \Omega(1) \cdot m \cdot \sum_{j = 3}^{(\log_2 m) - 5} q_m^j \cdot \E_{v \sim V}[v \mid v \geq 1/(q_m^j j^2)]\\
&= \Omega(1) \cdot m \cdot \sum_{j = 3}^{(\log_2 m) - 5} 1/j = \Omega(m \log \log m).
\end{align*}

Therefore, $\textsc{Pro} = \Omega(\log \log m) \cdot \textsc{Sing}$ for the constructed family of instances.
\end{proof}

\section{Conclusions}
\label{sec:open}
\noindent In this paper, we consider the setting when items have stochastic horizons. We show a constant-approximation against the prophet when the horizons satisfy the MHR condition. Unlike the classic multi-choice prophet inequalities where the approximation ratio goes to $1$ when the number of items becomes large, we show a $1.57$ (improves to $2$ when the number of items becomes large) approximation lower bound even when the horizons are \emph{i.i.d.} geometric. Our constant is tight for the single-item setting.

We now list several open questions. First, our constant factor for the upper bound ($53$) in the multi-item setting does not match the lower bound ($2$). Closing the gap would be interesting as a future direction.
Next, is it possible to have stochastic horizons in more general prophet-inequality settings such as~\cite{DuettingFKL17}? Finally, it would be interesting to extend our work to the case where items arrive and depart in a stochastic fashion.


\section*{Acknowledgments}
This work is supported by  NSF grants ECCS-1847393, DMS-1839346, CCF-1408784, CCF-1637397, and IIS-1447554; ARL award W911NF-17-1-0094; ONR award N00014-19-1-2268; and research awards from Adobe and Facebook.

\bibliographystyle{acm}
\bibliography{ref}

\appendix
\section{Omitted Proofs}

\subsection{Lower Bound for non-MHR Horizons (Proof of Theorem~\ref{thm:gen_low_any})}
\label{app:gen_low_any}
\noindent We assume $m = 1$ in this proof. The same ideas apply to any $m \ge 1$. Without loss of generality, assume $n = 2^{ck}$ for $c$ that will be fixed later. The horizon is $2^{ci}$ with probability $2^{-i-1}$ for $i = 0, 1, 2, \ldots, k - 1$, and is $n$ with probability $2^{-k}$. Intuitively, there are $k + 1$ possible horizons, where each one is exponentially longer, yet exponentially less probable than the previous one. Denote the valuation distribution by: $a_1$ with probability $p_1$, $a_2$ with probability $p_2$, \ldots, $a_m$ with probability $p_m$ where $a_1 < a_2 < \cdots < a_m$. Here we set $m = ck$, $a_i = 2^{i/c}$ and $p_i = 2^{-i}$ except $p_{ck} = 2^{-ck + 1}$.

Let $\textsc{VPro}$ be any policy that knows realized valuations but not realized horizon, and $\textsc{Pro}$ be the omniscient  prophet. The only information $\textsc{VPro}$ does not know beforehand is the realized horizon, and during execution it cannot do anything once the horizon ends. Therefore it should aim for a specific buyer in advance:
\begin{align*}
\textsc{VPro} &= \E_{v_1, \ldots, v_n} \left[\max_i \left(\sum_{j \geq i}\pi_j\right) M_i(v_1, \ldots, v_n)\right]\\
&\leq 2 \E_{v_1, \ldots, v_n} \left[\max_i \pi_i M_i(v_1, \ldots, v_n)\right],
\end{align*}
where $\pi_i$ is the probability for the horizon to be $2^{ci}$ and $M_i(v_1, \ldots, v_n)$ is the maximum of the first $2^{ci}$ values. Then we have
\begin{align*}
\textsc{VPro} &\leq 4 \cdot \sum_{i = 0}^k 2^i \Pr_{v_1, \ldots, v_n} [\exists j, \pi_j M_j \geq 2^i] \\
 &\leq 4 \sum_{i = 0}^k 2^i \min\left(1, \sum_j \Pr_{v_1, \ldots, v_n}[\pi_j M_j \geq 2^i]\right)\\
&\leq 4 \sum_{i = 0}^k 2^i \min\left(1, \sum_j 2^{cj}\Pr_{v_1, \ldots, v_n}[2^{-j} v_1 \geq 2^i]\right) \\ 
&\leq 4 \sum_{i = 0}^k 2^i \min\left(1, 2\sum_j 2^{cj} 2^{-ci - cj}\right)\\
&\leq 4 \sum_{i = 0}^k 2^i \min\left(1, 2(k+1)2^{-ci}\right) \ \ = O(1)
\end{align*}
when $k = 2^c$. Here the first inequality is an approximation of the Lebesgue integral of $\textsc{VPro}$. The second and third inequalities are union bounds.

On the other hand, we have
\begin{align*}
\textsc{Pro} &\geq \frac{1}{2} \cdot \sum_{i = 0}^k 2^i \sum_{j = 0}^k \pi_j \cdot \Pr_{v_1, \ldots, v_n} [M_j \geq 2^i]\\
&\geq \frac{1}{2} \cdot \sum_{i = 0}^k 2^i \sum_{j = 0}^k 2^{-j} \cdot \min\left((1 - e^{-1}), (1 - e^{-1})2^{cj}\cdot\Pr_{v_1, \ldots, v_n} [v_1 \geq 2^i]\right)\\
&\geq \frac{1}{2} \cdot \sum_{i = 0}^k 2^i \sum_{j = 0}^k 2^{-j} \cdot \min\left((1 - e^{-1}), (1 - e^{-1})2^{cj-ci}\right)\\
&\geq \frac{1}{2} \cdot \sum_{i = 0}^k 2^i 2^{-i} \cdot \min\left((1 - e^{-1}), (1 - e^{-1})2^{ci-ci}\right) \ \ = \Omega(k).
\end{align*}
Here the first inequality is an approximation of the Lebesgue integral of $\textsc{Pro}$. The second inequality uses the fact that: if sum of the probabilities of several independent events is $p \leq 1$, then the union of them happens with probability at least $(1 - e^{-1}) \cdot p$.
As $n = 2^{ck} = 2^{k \log_2 k}$, we know $k = \Theta\left(\frac{\log n}{\log \log n}\right)$.
\subsection{Proof of Lemma~\ref{lem:equiv}}
~\label{app:equiv}
\noindent We only show that  $\frac{\textsc{Pro}_m(\lambda)}{\textsc{Pro}'_m(\lambda)} \to 1$. The proof of the second part that $\frac{\textsc{Alg}^*_m(\lambda)}{\textsc{Alg}'_m(\lambda)} \to 1$  uses a similar argument. We consider the following two processes: the main process based on the actual departure of items in which two departures might happen simultaneously and the alternative process in which at each time step at most one item can depart. The alternative process might modify the number of items in the system at some point during the process with a very small probability. In that case, states of the two processes differ at some point and the two corresponding prophets might achieve different values. Otherwise, they are always at the same state during the process and their values are exactly the same.

There exist two sources of differences (only consider the first time step that they are not at the same state during the process).
The first one which we call type $1$ is as follows: If two departures happen at the same time, alternative process will only consider one of them. In other words, if the main process goes from state $k$ to $k'$ such that $k' < k-1$, the  alternative process will go from $k$ to $k-1$ and will assume there are still $k-1$ items in the system at the next time step. The probability of such a difference for a state $k$ is not more than $\frac{2^k (1-q)^2}{kq^{k-1}(1-q)}$ which goes to $0$ as $q$ approaches $1$. Therefore, using the union bound and the fact that $m$ is finite, the probability of such a difference during the process at some state $k$ denoted by $p_1$ also approaches $0$.

The second source of differences (type $2$) is: If the current state of the main process is $k$ and it remains unchanged after a time step (no departures happens) with a very small probability ($\frac{q^k-1+k(1-q)}{q^k}$), the state of the alternative process will change to state $k-1$ at this time step. The probability of such a difference at state $k$ is $\frac{q^k-1+k(1-q)}{k(1-q)q^{k-1}}$. We can see that this probability goes to $0$ as $q$ approaches $1$ and since $m$ is finite, using the union bound, the probability of such a difference during the process denoted by $p_2$ goes to $0$ as $q$ approaches $1$.
  
Note that the value of $\textsc{Pro}'$ (alternative prophet) can only be greater than $\textsc{Pro}$ (main prophet) if two departures happen at the same time step during the actual departure process (type $1$ difference). However, note that the conditional expectation of $\textsc{Pro}'$ given that such a difference exists is not greater than $\textsc{Pro}'$ (the expected welfare of $\textsc{Pro}'$). Therefore, we have:
\[
\textsc{Pro} \ge (1-p_1) \textsc{Pro}'.
\]
In addition, $\textsc{Pro}$ can be only greater then $\textsc{Pro}'$ if a type $2$ difference exists. Similarly, the conditional expectation of $\textsc{Pro}$ given that a type $2$ difference exists is not greater than $\textsc{Pro}$. Therefore, we also have: 
\[
\textsc{Pro}' \ge (1-p_2) \textsc{Pro}.
\]
Using the last two inequalities,
\[
1-p_1 \le \frac{\textsc{Pro}} {\textsc{Pro}'} \le \frac{1}{1-p_2}.
\]
Using that $p_1$ and $p_2$ both go to $0$, we have $\frac{\textsc{Pro}}{\textsc{Pro}'} \to 1$. 

\end{document}